\newcommand{\R}{\mathbb{R}}
\date{}
\begin{document}
\author{Silvia Noschese \and Lothar Reichel}
\institute{Silvia Noschese\at
Dipartimento di Matematica\\ 
SAPIENZA Universit\`a di Roma\\
P.le Aldo Moro 5, 00185 Roma, Italy\\
\email{noschese@mat.uniroma1.it \\
{\it Corresponding author.}}\\[2mm]
Lothar Reichel\at
Department of Mathematical Sciences\\ 
Kent State University\\
Kent, OH 44242, USA\\
\email{reichel@math.kent.edu}
}
\title{Network connectivity analysis via shortest paths}
\maketitle

\begin{abstract}
 {Complex systems of interacting components often can be modeled by a simple graph 
$\mathcal{G}$ that consists of a set of $n$ nodes and a set of $m$ edges. Such a graph can
be represented by an adjacency matrix $A\in\R^{n\times n}$, whose $(ij)$th entry is one if 
there is an edge pointing from node $i$ to node $j$, and is zero otherwise. The matrix 
$A$ and its positive integer powers reveal important properties of the graph and allow
the construction of the path length matrix $L$ for the graph. The $(ij)$th entry of $L$
is the length of the shortest path from node $i$ to node $j$; if there is no path between 
these nodes, then the value of the entry is set to $\infty$. We are interested in how well 
information flows via shortest paths of the graph. This can be studied with the aid of the
path length matrix. The path length matrix allows the definition of several measures of
communication in the network defined by the graph such as the global $K$-efficiency, which
considers shortest paths that are made up of at most $K$ edges for some $K<n$, as well as 
the number of such shortest paths. Novel notions of connectivity introduced in this paper 
help us understand the importance of specific edges for the flow of information through 
the graph. This is of interest when seeking to simplify a network by removing selected 
edges or trying to assess the sensitivity of the flow of information to changes due to 
exterior causes such as a traffic stoppage on a road network.}
\end{abstract}

\keywords{network analysis, path length matrix, global connectivity, edge connectivity 
centrality, edge betweenness centrality}
\subclass{05C50, 15A45, 65F50}

\section{Introduction}\label{s1}
An important characteristic of a network is the ease with which one part of the network 
can be reached from another part by following its edges. Borgatti \cite{Bor} introduced
an interesting classification of centrality measures based on how traffic flows through 
a network. We are especially interested in how well information flows through a network 
via shortest paths. Such measures, e.g., the diameter of the graph that represents the 
network or the global efficiency of the network, aim to shed light on patterns of 
communication in a network as well as on the effect of edge removals. 

Let us introduce some notation and definitions that will be used throughout this paper. A 
network is represented by a graph
$\mathcal{G}=\langle\mathcal{V},\mathcal{E}\rangle$, which consists of a set of 
\emph{nodes} or \emph{vertices} $\mathcal{V}=\{1,2,\dots,n\}$, and a set of \emph{edges} 
$\mathcal{E}=\{e_1,e_2,\dots,e_m\}$ that connect the nodes. A graph is said to be 
\emph{unweighted} if the edges are treated as binary interactions, i.e., there are no 
associated weights. An edge is referred to as \emph{directed} if it starts at a node $i$ 
and ends at a node $j$, and is denoted by $e(i\rightarrow j)$. An edge between the nodes 
$i$ and $j$ is said to be \emph{undirected} when the pair of nodes is unordered, and
is denoted by $e(i\leftrightarrow j)$. A graph with only undirected edges is said to be 
\emph{undirected}; otherwise the graph is \emph{directed}. We refer to an unweighted,
possibly directed, graph without multiple edges or self-loops as a \emph{simple graph}; 
see, e.g., \cite[Chapter 6]{newman2010}. This work considers simple graphs. We remark that
the graphs discussed in this paper may have disconnected components.

The \emph{adjacency matrix} $A=[a_{ij}]_{i,j=1}^n\in\R^{n\times n}$ for a  graph 
$\mathcal{G}$ has the entry $a_{ij}=1$ if there is an edge $e(i\rightarrow j)$ in 
$\mathcal{G}$. If there is no edge $e(i\rightarrow j)$ in $\mathcal{G}$, then $a_{ij}=0$. 
In particular, the diagonal entries of the adjacency matrix for a simple graph vanish. 
Typically, the adjacency matrix of a graph is sparse, i.e., it has many more zero entries
than entries equal to one.

The \emph{out-degree} of node $i$ is defined as the number of edges that start at node 
$i$, i.e., 
\[
{\rm deg}_i^{\rm out}=\sum_{k=1}^n a_{ik},
\]
and the \emph{in-degree} of node $i$ is the number of edges that end at node $i$, i.e.,
\[
{\rm deg}_i^{\rm in}=\sum_{k=1}^n a_{ki}.
\]

A sequence of $k$ edges (not necessarily distinct) such that 
$\{e(j_1\rightarrow j_2),e(j_2\rightarrow j_3),\ldots,e(j_k\rightarrow j_{k+1})\}$ form a
\emph{walk} of length $k$. If $j_{k+1}=j_1$, then the walk is said to be \emph{closed}. If
the  {nodes} in a walk are distinct, then the walk is referred to as a \emph{path}. Thus, the 
maximal length of a path is $n-1$. A simple (directed) graph $\mathcal{G}$ is said to be 
(strongly) {\em connected} if every node is reachable by a path from every other node. 
This property is equivalent to the irreducibility of the adjacency matrix associated with 
$\mathcal{G}$. For further discussions on the above definitions and networks, as well as 
for many applications; see \cite{estrada2011structure,newman2010}.

The entry $a^{(k)}_{ij}$ of the matrix $A^k=[a^{(k)}_{ij}]_{i,j=1}^n\in\R^{n\times n}$ 
counts the number of walks of length $k$ from node $i$ to node $j$. A matrix function 
based on the powers $A^k$, $k=0,1,\ldots~$, that is analytic at the origin and vanishes 
there can be defined by a Maclaurin series
\begin{equation}\label{matfun}
f(A)=\sum_{k=1}^{\infty}c_kA^k,
\end{equation}
where the choice of coefficients $c_k$ is such that the series converges. Matrix functions
applied in network analysis generally have the property that $0\leq c_{k+1}\leq c_k$ for 
all $k\geq1$, because information flows more easily through short walks than through long 
ones; see, e.g., \cite{AMDLCR} for a discussion. The most common matrix function used in 
network analysis is the matrix exponential; see 
\cite{AB,BFRR,DMR,DMR2,DMR3,estrada2011structure,EH,estrada2005subgraph,FMRR,FRR} for 
discussions and illustrations. However, typically the \emph{modified matrix exponential}
$\exp_0(A):=\exp(A)-I$, where $I$ denotes the identity matrix, is preferable, because the 
first term in the Maclaurin series for $\exp(A)$ has no natural interpretation in the 
context of network modeling. For the modified matrix exponential, we have $c_k=1/k!$, and 
the series \eqref{matfun} converges for any adjacency matrix $A$. 

The \emph{communicability} between distinct nodes $i$ and $j$, $i\ne j$, is defined by
\begin{equation}\label{expcom}
[\exp_0(A)]_{ij}=\sum_{k=1}^{\infty}\frac{a^{(k)}_{ij}}{k!};
\end{equation}
see \cite{estrada2011structure,EH} for the analogous definition based on $\exp(A)$. The 
larger the value of $[\exp_0(A)]_{ij}$, the better is the communicability between nodes 
$i$ and $j$. The communicability accounts for all possible routes of communication between
nodes $i$ and $j$ in the network defined by the adjacency matrix $A$, and assigns a larger
weight to shorter walks than to longer ones.  {When $j=i$ the quantity \eqref{expcom} is
referred to as the \emph{subgraph centrality} of node $i$; see \cite{estrada2005subgraph}.}

We are interested in how well information flows via shortest paths. These paths 
provide the most important transmission of information between nodes. For instance, 
when considering transportation from a source node to a termination node, shortest path 
algorithms are used in GPS navigation apps such as Google Maps. We will describe how to 
construct suitable matrices, based on the shortest paths or on the shortest paths with at
most a specified number of edges, that shed light on geodesic communication in a network.

 {Let $K$ be a fixed integer such that $1\leq K<n$. A matrix of particular interest to 
us is the $K$-\emph{path length matrix} associated with the network. It is defined as
\[
L^{(K)}=[\ell^{(K)}_{ij}]_{i,j=1}^n\in\R^{n\times n},
\]
where $\ell^{(K)}_{ij}=q$ if there exists a shortest path from node $i$ to node $j$ that 
is made up of $q$ edges for $0\leq q\leq K$. If every path from node $i$ to node $j$ is 
made up of more than $K$ edges, or if $i \ne j$ and there is no path from $i$ to $j$, then 
$\ell^{(K)}_{ij}=\infty$. The diagonal entries of $L^{(K)}$ are set to zero.}

We also introduce the $K$-\emph{path counter matrix} 
\[
C^{(K)}=[c^{(K)}_{ij}]_{i,j=1}^n\in\R^{n\times n} 
\]
associated with the network, where $c^{(K)}_{ij}=p$ if there are $p$ shortest paths from 
node $i$ to node $j$ made up of at most $K$ edges; if every path from node $i$ to node $j$
is made up of more than $K$ edges, or if there is no path from $i$ to $j$, then 
$c^{(K)}_{ij}=0$. The diagonal entries of $C^{(K)}$ are set to one. This matrix sheds 
light on the effect of removing selected edges from the graph ${\mathcal G}$. 

This paper is organized as follows: Section \ref{s1.5} introduces notation to be used in
the sequel and Section
\ref{s2} describes properties of the $K$-path length matrix $L^{(K)}$ as well as of the 
$K$-path counter matrix $C^{(K)}$. In Section \ref{s3} we discuss properties of the global 
$K$-connectivity, which easily can be computed by means of the matrices $L^{(K)}$ and 
$C^{(K)}$. We also introduce the notion of edge connectivity centrality. Section \ref{s4} 
investigates which edges should be removed in order to decrease the global connectivity 
the most. We discuss the role played by the novel measure of edge connectivity centrality, 
which refines the well-known notion of edge betweenness centrality \cite{GN}. A few 
computed illustrations are reported in Section \ref{s5} and concluding remarks can be found
in Section \ref{s6}.

 {We conclude this section with some comments on related work. Brown and Colburn 
\cite{BC} discuss the computation of the roots of the reliability polynomial. These 
polynomials give the probability that a graph remains connected when some edges fail 
independently with a specified probability. However, the calculation of these polynomials 
generally is NP-hard. They therefore only can be computed for small graphs.}

 {A variety of ways to discuss the importance of edges are discussed in
\cite{AB,DJNR,NR,Sc}. These works use the matrix exponential and 
consider changes in the total communicability, defined as ${\bf 1}^T\exp(A)\bf{1}$, to 
identify edges whose removal has a large impact on communication. Here $A$ denotes the
adjacency matrix for the graph and ${\bf 1}$ is the vector with all entries one. 
De la Cruz Cabrera et al. \cite{DMR2,DMR3} seek to identify the 
most important nodes in the line graph associated with ${\mathcal G}$. These nodes 
correspond to the most important edges in ${\mathcal G}$. It is illustrated that this 
approach may identify more appropriate edges than the technique in \cite{AB}, but the 
computational burden is large for graphs with many edges. The method proposed in the 
present paper is much less demanding computationally. The sensitivity of the total 
communicability to changes in edge weights in a graph is discussed in \cite{NR,Sc}. All
methods mentioned assume that the communicability measure \eqref{expcom} is appropriate.}

 {We are interested in exploring communication along the shortest path(s) between 
nodes and we would like to take the number of distinct shortest paths into account. This 
requires the use of other measures of communicability, including measures that are related
to betweenness centrality. The betweenness centrality of a node $i$ is defined as the sum 
over all ordered pairs of distinct nodes $(h,k)$ of the fraction of shortest paths from 
node $h$ to node $k$ that pass through node $i$ (where $i$ is not an endpoint); see, e.g.,
\cite{estrada2011structure,Fr,newman2010}. This measure can be used to estimate the 
importance of an edge. The present paper proposes modifications of this measure.}

\section{Preliminaries}\label{s1.5}
The \emph{diameter} $d_{\mathcal{G}}$ of a connected graph $\mathcal{G}$ is the maximal
length of the shortest path between any pair of distinct nodes of the graph, i.e.,
\begin{equation} \label{d_G}
d_{\mathcal{G}}=\max_{1\leq i,j\leq n} \ell_{ij}^{(n-1)}.
\end{equation}
This quantity provides a measure of how easy it is for the nodes of a graph to 
communicate. 

We refer to the $K$-path length matrix for $K=n-1$,
\[
L=[\ell_{ij}]_{i,j=1}^n=L^{(n-1)},
\]
as the \emph{path length matrix} and to the $K$-path counter matrix for $K=n-1$,
\[
C=[c_{ij}]_{i,j=1}^n=C^{(n-1)},
\]
as the \emph{path counter matrix}. The path length matrix $L$ allows us to define the 
\emph{global efficiency} of a graph, which is the average inverse geodesic length of 
$\mathcal{G}$,
\begin{equation}\label{e_G}
e_{\mathcal{G}}=\frac{1}{n(n-1)}\sum_{i,j\ne i} \frac{1}{\ell_{ij}},
\end{equation}
where $1/\infty$ is identified with $0$; see \cite{BBV,HKYH}. We remark that in the 
context of molecular chemistry, the sum of the reciprocals of the lengths between all 
pairs of nodes of an undirected, unweighted, connected graph is known as the 
\emph{Harary index} \cite{PNTM}. The notion of global efficiency also is useful when the
network has more than one connected component, because infinite lengths do not contribute
to the sum \eqref{e_G}. Taking into account only shortest paths made up of at most $K$ 
edges, with $1\leq K<n$, leads to the \emph{global $K$-efficiency} 
\begin{equation}\label{ekg}
e^{(K)}_{\mathcal{G}}=\frac{1}{n(n-1)}\sum_{i,j\ne i}\frac{1}{\ell^{(K)}_{ij}},
\end{equation}
which was introduced in \cite{NR_na}. 

The larger the number of shortest paths between a pair of nodes is, the smaller is the risk 
of making the graph disconnected by removing an edge of such a path. The average 
number of geodesics normalized by their length can be computed with the aid
of the path counter matrix $C$:
\begin{equation}\label{c_G}
c_{\mathcal{G}}=\frac{1}{n(n-1)}\sum_{i,j\ne i} \frac{c_{ij}}{\ell_{ij}}.
\end{equation}
This measure, which was introduced by Li et al.  \cite{LPWS}, is referred to as the 
\emph{global connectivity}. We define the \emph{global $K$-connectivity}, 
\begin{equation}\label{ckG}
c^{(K)}_{\mathcal{G}}=\frac{1}{n(n-1)}\sum_{i,j\ne i}\frac{c^{(K)}_{ij}}{\ell^{(K)}_{ij}},
\end{equation}
which only takes into account the shortest paths that are made up of at most $K$ edges, 
with $1\leq K<n$.

It is of interest to analyze the effect of the removal of selected edges of $\mathcal{G}$
on these connectivity measures. We will introduce suitable edge centrality measures that
guide our choice of edges to be removed.

\section{The $K$-path length and $K$-path counter matrices}\label{s2}
Recall that the $(ij)$th entry $a^{(h)}_{ij}$ of the matrix $A^h$ counts the number of
walks of length $h$ from node $i$ to node  $j$. It therefore is straightforward to show 
the following results. They lead to the construction of the $K$-path length matrix 
$L^{(K)}$ and of the $K$-path counter matrix $C^{(K)}$ for integers $K$ with $1\leq K<n$.

\begin{proposition}\label{fnnz}
Let the graph $\cal{G}=\langle\mathcal{V},\mathcal{E}\rangle$ be connected. For all  
$i,j \in \mathcal{V}$, $i\ne j$, if  $e(i\rightarrow j)\in \mathcal{E}$, i.e., if
$a_{ij}^{(1)}=a_{ij}=1$, then there exists a shortest path of length $q=1$ from node $i$ 
to node $j$, and $\ell_{ij}^{(h)}=1$, for all  $h$, $1\leq h<n$. If, instead, 
$e(i\rightarrow j)\notin\mathcal{E}$, then let $q$, for some $1<q<n$, be the smallest 
power of $A$ such that $a_{ij}^{(q)}=p>0$, that is
\begin{equation*}
a^{(h)}_{ij}=0,\quad \forall h,~~~ 1<h<q,\quad \mathrm{and} \quad a^{(q)}_{ij}=p.
\end{equation*}
Then the length of the shortest path from node $i$ to node $j$ is $q$, i.e.,
\[
\ell_{ij}^{(h)}=\infty,\quad \forall h,~~~ 1< h<q,\quad \mathrm{and}  \quad 
\ell_{ij}^{(h)} = q,\quad \forall h,~~~ q\leq h<n,
\] 
and $p$ is the number of shortest paths from node $i$ to node $j$, i.e.,
\[
c_{ij}^{(h)}=0,\quad \forall h,~~~ 1< h<q,\quad \mathrm{and}  \quad c_{ij}^{(h)} = 
p,\quad \forall h,~~~ q\leq h<n.
\] 
\end{proposition}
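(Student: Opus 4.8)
The plan is to unwind the definitions of $\ell_{ij}^{(h)}$ and $c_{ij}^{(h)}$ and to combine them with the single combinatorial fact underlying the whole statement: a walk of minimal length between two distinct nodes is necessarily a path. I would begin with the case $a_{ij}=1$. Since $i\ne j$, the edge $e(i\rightarrow j)$ is itself a path of length $1$, and no path of length $0$ joins distinct nodes; hence the shortest path has length $q=1$, and the defining convention for $\ell_{ij}^{(h)}$ with $h\ge 1$ gives $\ell_{ij}^{(h)}=1$ for all $1\le h<n$.

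For the case $e(i\rightarrow j)\notin\mathcal{E}$, I would first observe that connectedness of $\mathcal{G}$ guarantees a path from $i$ to $j$, and that any path visits at most $n$ distinct nodes and so has length at most $n-1$; consequently the smallest $q$ with $a_{ij}^{(q)}>0$ satisfies $q<n$, while $a_{ij}^{(1)}=a_{ij}=0$ forces $q>1$. The relations $a^{(h)}_{ij}=0$ for $1<h<q$ say there is no walk, hence no path, of length $h$ from $i$ to $j$; together with $a_{ij}=0$ this means there is no path from $i$ to $j$ of length at most $h$, so $\ell_{ij}^{(h)}=\infty$ and $c_{ij}^{(h)}=0$ for $1<h<q$.

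The heart of the argument is to show that each of the $p=a_{ij}^{(q)}$ walks of length $q$ from $i$ to $j$ is in fact a path, and that these walks are exactly the shortest paths. If a walk $i=j_1,j_2,\dots,j_{q+1}=j$ contained a repeated node, say $j_r=j_s$ with $r<s$, then excising the closed subwalk between positions $r$ and $s$ would yield a strictly shorter walk from $i$ to $j$, contradicting the minimality of $q$; hence every length-$q$ walk is a path. Conversely, there is no walk of length less than $q$ at all, so every path from $i$ to $j$ has length $\ge q$; thus the length-$q$ walks are precisely the shortest paths, and there are $p$ of them. Reading off the definitions, $\ell_{ij}^{(q)}=q$ and $c_{ij}^{(q)}=p$, and for every $h$ with $q\le h<n$ the shortest paths have length $q\le h$, so likewise $\ell_{ij}^{(h)}=q$ and $c_{ij}^{(h)}=p$. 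The only genuinely non-clerical point is the minimal-walk-is-a-path lemma, established by the standard shortcutting argument above; everything else is bookkeeping against the definitions, with connectedness used only to place $q$ in the range $(1,n)$.
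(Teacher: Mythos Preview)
Your argument is correct and is precisely the kind of justification the paper has in mind: the paper does not give a formal proof of this proposition at all, but simply prefaces it with the remark that $a^{(h)}_{ij}$ counts walks of length $h$ and that the result is ``straightforward.'' Your write-up fills in exactly the one substantive point behind that assertion---that a walk of minimal length must be a path, via the standard shortcutting argument---and otherwise just reads off the definitions of $\ell_{ij}^{(h)}$ and $c_{ij}^{(h)}$, which matches the paper's intent.
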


\begin{proposition}\label{fnnz2}
Let $\cal{G}$ be a graph that is not necessarily connected. Fix an integer $K$, with 
$1\leq K<n$. For all $i\ne j$, let $q$, $1\leq q\leq K$, be the smallest positive integer 
such that $a_{ij}^{(q)}=p$. Then there are $p$ shortest paths from node $i$ to node $j$ 
that consist of $q$ edges. Hence, $\ell_{ij}^{(K)}=q$ and $c_{ij}^{(K)} = p$. If there is 
no positive integer $p$ such that $a_{ij}^{(q)}=p$, then there is no path made up of at 
most $K$ edges from node $i$ to node $j$. Then $\ell_{ij}^{(K)} = \infty$ and 
$c_{ij}^{(K)} = 0$.
\end{proposition}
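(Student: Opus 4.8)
The plan is to argue directly from the combinatorial meaning of the entries of the powers $A^h$, namely that $a^{(h)}_{ij}$ counts walks of length $h$ from node $i$ to node $j$, and then to use the elementary observation that a walk of minimal length between two distinct nodes must in fact be a path (it cannot repeat a node, for otherwise one could splice out the closed sub-walk and obtain a shorter walk). First I would fix $i\ne j$ and consider the sequence $a^{(1)}_{ij},a^{(2)}_{ij},\dots,a^{(K)}_{ij}$. Two cases arise: either all these entries vanish, or there is a smallest index $q\le K$ with $a^{(q)}_{ij}=p>0$.

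In the first case, the absence of any walk of length $h\le K$ from $i$ to $j$ immediately forbids any path of length $h\le K$ from $i$ to $j$ (a path is in particular a walk), so by the definitions of the $K$-path length and $K$-path counter matrices we set $\ell^{(K)}_{ij}=\infty$ and $c^{(K)}_{ij}=0$, which is exactly the claimed value. In the second case, I would show that every one of the $p$ walks of length $q$ counted by $a^{(q)}_{ij}$ is actually a path: if such a walk revisited a node, deleting the intervening closed loop would yield a strictly shorter walk from $i$ to $j$, say of length $q'<q$, contradicting minimality of $q$ (since that would force $a^{(q')}_{ij}>0$). Hence all $p$ of these walks are paths of length $q$, and conversely every path of length $q$ is one of these walks, so there are exactly $p$ shortest paths, each made up of $q$ edges; moreover $q$ is genuinely the shortest path length because any shorter path would again be a shorter walk, contradicting the minimality of $q$. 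Therefore $\ell^{(K)}_{ij}=q$ and $c^{(K)}_{ij}=p$, as asserted.

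The observations about the diagonal entries and the bound $q\le K\le n-1$ are immediate from the conventions fixed in the definitions and from the fact that a path has at most $n-1$ edges, so no separate argument is needed there. The only genuinely non-trivial point — and the step I expect to be the crux — is the ``loop excision'' argument establishing that a walk of minimal length is automatically a path; everything else is bookkeeping against the stated definitions. Since this excision argument is entirely standard in graph theory, the proof is short, which is consistent with the authors' preceding remark that these results are ``straightforward to show.''
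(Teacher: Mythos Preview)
Your proposal is correct and matches the paper's approach: the authors do not actually supply a written proof for this proposition, stating only that it is ``straightforward to show'' from the fact that $a^{(h)}_{ij}$ counts walks of length $h$ from node $i$ to node $j$. Your argument fills in precisely the details one would expect behind that remark---the case split on whether some $a^{(h)}_{ij}$ with $h\le K$ is nonzero, together with the standard loop-excision observation that a walk of minimal length is a path---so there is nothing to contrast.
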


The following result can be shown by using Propositions \ref{fnnz} and \ref{fnnz2}. 

\begin{proposition}\label{fnnz3}
Let $\cal{G}$ be a graph that is not necessarily connected. Then, for all $i\ne j$, the
entries of the $(h-1)$-path length matrix and $(h-1)$-path counter matrix, of the $h$-path 
length matrix and $h$-path counter matrix, and of the path length and path counter 
matrices satisfy
\[
\ell_{ij} \leq \ell_{ij}^{(h)}\leq\ell_{ij}^{(h-1)}\leq \infty  \,\,\,\mathrm{ and } 
\,\,\, 0\leq c_{ij}^{(h-1)} \leq c_{ij}^{(h)}\leq c_{ij},  \quad \forall h,~~~ 1<h<n-1.
\] 
If the graph $\cal{G}$ is connected and has diameter $d_{\mathcal{G}}$, then
\[
\ell_{ij}^{(h)} = \ell_{ij}^{(d_{\mathcal{G}})}=\ell_{ij}<\infty \,\,\,\mathrm{ and } 
\,\,\, c_{ij}^{(h)} = c_{ij}^{(d_{\mathcal{G}})}=c_{ij}>0,  \quad \forall h,~~~ 
d_{\mathcal{G}}\leq h<n.
\] 
\end{proposition}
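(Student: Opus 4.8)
The plan is to read everything off from Propositions \ref{fnnz} and \ref{fnnz2}. Fix nodes $i\ne j$ and let $q$ be the smallest positive integer with $a_{ij}^{(q)}>0$, with the convention $q=\infty$ if no such integer exists. By Proposition \ref{fnnz2}, for every $K$ with $1\le K<n$ we have $\ell_{ij}^{(K)}=q$ and $c_{ij}^{(K)}=a_{ij}^{(q)}$ when $q\le K$, and $\ell_{ij}^{(K)}=\infty$, $c_{ij}^{(K)}=0$ when $q>K$. Since a path has at most $n-1$ edges and the minimality of $q$ forces every walk of length $q$ from $i$ to $j$ to be a path, $q\le n-1$ whenever $q$ is finite; hence $\ell_{ij}=\ell_{ij}^{(n-1)}=q$ and $c_{ij}=c_{ij}^{(n-1)}=a_{ij}^{(q)}$, i.e., $q$ really is the true shortest-path length.

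For the first chain of inequalities I would do a case analysis on the position of $q$ relative to $h-1$ and $h$, for $1<h<n-1$. If $q\le h-1$, then $\ell_{ij}^{(h-1)}=\ell_{ij}^{(h)}=\ell_{ij}=q$ and $c_{ij}^{(h-1)}=c_{ij}^{(h)}=c_{ij}=a_{ij}^{(q)}$, so every asserted inequality holds, trivially, with equality (apart from $\ell_{ij}^{(h-1)}\le\infty$). If $q=h$, then $\ell_{ij}^{(h-1)}=\infty$ but $\ell_{ij}^{(h)}=\ell_{ij}=h$, and $c_{ij}^{(h-1)}=0$ but $c_{ij}^{(h)}=c_{ij}=a_{ij}^{(h)}>0$. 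If $q>h$ (possibly $q=\infty$), then $\ell_{ij}^{(h-1)}=\ell_{ij}^{(h)}=\infty$ and $c_{ij}^{(h-1)}=c_{ij}^{(h)}=0$, while $\ell_{ij}=q\le\infty=\ell_{ij}^{(h)}$ and $0=c_{ij}^{(h)}\le c_{ij}$. In all three cases the stated inequalities $\ell_{ij}\le\ell_{ij}^{(h)}\le\ell_{ij}^{(h-1)}\le\infty$ and $0\le c_{ij}^{(h-1)}\le c_{ij}^{(h)}\le c_{ij}$ follow.

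For the second part, assume $\mathcal{G}$ is connected with diameter $d_{\mathcal{G}}$. Connectedness makes $q$ finite, and by \eqref{d_G} we have $q=\ell_{ij}\le d_{\mathcal{G}}$. Thus for every $h$ with $d_{\mathcal{G}}\le h<n$ we have $q\le d_{\mathcal{G}}\le h$, so Proposition \ref{fnnz2} gives $\ell_{ij}^{(h)}=q=\ell_{ij}<\infty$ and $c_{ij}^{(h)}=a_{ij}^{(q)}=c_{ij}>0$; the choice $h=d_{\mathcal{G}}$ supplies the middle members $\ell_{ij}^{(d_{\mathcal{G}})}$ and $c_{ij}^{(d_{\mathcal{G}})}$ of the asserted equalities. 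Because $d_{\mathcal{G}}\le n-1$, this range of $h$ is non-empty, so the statement is non-vacuous.

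I do not expect a genuine obstacle here: the content is just that truncating paths at $K$ edges stops mattering once $K$ reaches the true shortest-path length $q$ — the stabilization already recorded in Propositions \ref{fnnz} and \ref{fnnz2} — so the only thing needing care is the bookkeeping around the value $\infty$ in the three cases above.
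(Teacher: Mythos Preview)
Your proposal is correct and follows exactly the approach the paper indicates: the paper does not give a detailed proof but simply states that the result ``can be shown by using Propositions \ref{fnnz} and \ref{fnnz2},'' and your case analysis on the position of $q$ relative to $h-1$, $h$ is precisely the way to unpack that. Your remark that minimality of $q$ forces any length-$q$ walk to be a path (hence $q\le n-1$) is the one nontrivial observation needed, and you have it.
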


It is easy to see that the triangle inequality holds for the entries of the path length 
matrix $L$ for a graph $\cal{G}$. The graph is not required to be connected.

\begin{proposition}\label{fnnz5}
Using the notation of Proposition \ref{fnnz3}, we have
$$
\ell_{ij}\leq \ell_{ih}+\ell_{hj},\qquad 1\leq i,j\leq n.
$$
\end{proposition}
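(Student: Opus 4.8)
The plan is to prove the triangle inequality directly from the combinatorial meaning of the entries of $L$, namely that $\ell_{ij}$ is the length of a shortest path (equivalently, a shortest walk, by Proposition~\ref{fnnz}) from node $i$ to node $j$, with the convention $\ell_{ij}=\infty$ when no path exists. First I would dispose of the degenerate cases: if either $\ell_{ih}=\infty$ or $\ell_{hj}=\infty$, then the right-hand side is $\infty$ and the inequality holds trivially; similarly if $i=j$ the left side is $0$ and there is nothing to check. So the only case requiring work is when both $\ell_{ih}$ and $\ell_{hj}$ are finite.

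In that case I would argue as follows. Let $q_1=\ell_{ih}$ and $q_2=\ell_{hj}$. By definition of the path length matrix, there is a walk $W_1$ from $i$ to $h$ of length $q_1$ and a walk $W_2$ from $h$ to $j$ of length $q_2$; concatenating them yields a walk from $i$ to $j$ of length $q_1+q_2$. Hence there exists a walk from $i$ to $j$ of length $q_1+q_2<\infty$, so in particular there is a path from $i$ to $j$ (a walk can always be shortened to a path by deleting closed sub-walks), and the length of a shortest such path satisfies $\ell_{ij}\le q_1+q_2=\ell_{ih}+\ell_{hj}$. Translating this into matrix-power language: if $a^{(q_1)}_{ih}>0$ and $a^{(q_2)}_{hj}>0$, then the $(i,j)$ entry of $A^{q_1}A^{q_2}=A^{q_1+q_2}$ is positive, i.e. $a^{(q_1+q_2)}_{ij}>0$, and by Proposition~\ref{fnnz} (or Proposition~\ref{fnnz2}) this forces $\ell_{ij}\le q_1+q_2$.

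The only mild subtlety — and the single place where one must be a little careful — is the implicit identification of "shortest path length" with "shortest walk length," i.e. that a shortest walk from $i$ to $j$ is automatically a path. This is exactly what Proposition~\ref{fnnz} encodes (the smallest $q$ with $a^{(q)}_{ij}>0$ is the geodesic distance), so I would simply invoke it rather than re-derive it. Everything else is the trivial observation that concatenating walks adds their lengths, together with the boundary-case bookkeeping for the $\infty$ symbol. Consequently the proof is essentially immediate once the cases are organized; there is no genuine obstacle.
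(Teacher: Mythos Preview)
Your argument is correct. The paper does not actually supply a proof of Proposition~\ref{fnnz5}: it simply prefaces the statement with ``It is easy to see that the triangle inequality holds for the entries of the path length matrix $L$'' and omits the details. Your proposal makes explicit exactly the reasoning the paper takes for granted---case analysis on $\infty$ entries, concatenation of walks, and the identification of shortest-walk length with shortest-path length via Proposition~\ref{fnnz}---so there is nothing to contrast; you have merely written out what the authors deemed obvious.
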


Note that the triangle inequality might not hold for the entries of a $K$-path length 
matrix for $1\leq K< n-1$. Specifically, it may happen, for some $1\leq i,j\leq n$, that 
$\ell_{ij}^{(K)}=\infty$, whereas $\ell_{ih}^{(K)}+\ell_{hj}^{(K)}<\infty$.

The MATLAB function \text{ConnectivityMatrices} provided below, with the adjacency matrix 
$A\in\R^{n\times n}$ for a graph $\mathcal{G}$ and the integer $K$, with $1\leq K < n$, as
input, returns the $K$-path length matrix $L^{(K)}$ and the $K$-path counter matrix 
$C^{(K)}$ associated with $\mathcal{G}$. If $K = n-1$,  then the function 
\text{ConnectivityMatrices} returns the path length matrix $L$ and path counter matrix 
$C$. The operator $==$ in the function stands for logical equal to, whereas the operator 
$*$ in line 6 denotes matrix multiplication. The function call ${\sf eye}(n)$ returns the 
identity matrix $I\in\R^{n\times n}$.  For a vector $v\in\R^{n}$, the function 
${\sf diag}(v)$ determines a diagonal matrix whose  $(j,j)$th entry is the $j$th component
of $v$, whereas the function call ${\sf any}(v)$ returns $1$ if the vector $v$ contains a 
non-zero entry and $0$ otherwise. For a matrix $M\in\R^{n\times n}$, the function call 
${\sf size}(M,1)$ returns $n$, the function call ${\sf diag}(M)$ computes a vector whose 
$j$th component is the $(j,j)$th entry of $M$, and the function call ${\sf any}(M)$ gives a 
row vector whose $j$th component is $1$ if the $j$th column of $M$ contains a non-zero 
entry and $0$ otherwise. The command $M(M==0)=\inf$ assigns the value $\infty$ to the zero
entries of $M$.

\begin{algorithm}[ht]
\begin{algorithmic}[1]
\Function{\text{ConnectivityMatrices}}{$A,K$}
\State $C = A$;
\State $n=\text{size}(A,1)$; $h = 2$; 
\State $L=A+\text{eye}(n)$; 
\While {\text{any}(\text{any} ($L==0$)) and $h<=K$} 
 \State $B = A*C$;
 \State $B=B-\text{diag}(\text{diag}($B$))$;
 \For {$i = 1:n$}
     \For {$j = 1:n$}
        \If{$C(i,j) == 0$ and $B(i,j) >0$}
           \State $C(i,j)=B(i,j)$;
           \State $L(i,j)=h$;
        \EndIf
     \EndFor
  \EndFor
  \State $h = h+1$;
\EndWhile
\State $C=C+\text{eye}(n)$;
\State $L(L==0)=\text{inf}$;  
\State $L=L-\text{eye}(n)$; \\
\Return $C$, $L$; 
\EndFunction
\end{algorithmic}
\end{algorithm}

\begin{remark}\label{rmk_1}
The graphs considered in this paper are unweighted, i.e., the edges represent connections
from one node to another and have no associated weights. This condition is necessary for 
Propositions \ref{fnnz}-\ref{fnnz3} to hold. Indeed, in the case of a weighted graph, the 
length of the shortest paths from node $i$ to node $j$ composed of $K+1$ edges, i.e., the 
sum of all weights of the edges of a path, could be smaller than the length of the 
shortest paths from node $i$ to node $j$ made up of at most $K$ edges; see \cite{NR_na} 
for further discussion.
\end{remark}

\section{Global $K$-efficiency, global $K$-connectivity, and related measures}\label{s3}
The \emph{efficiency} of a path between any two nodes of a graph $\mathcal{G}$ is defined 
as the inverse of the length of the path. The sum of the efficiencies of all shortest 
paths starting from node $i$,
\[
h_i=\sum_{j\ne i}\frac{1}{\ell_{ij}},
\]
is referred to as the \emph{harmonic centrality} of node $i$; see, e.g., \cite{BBV}. This
measure gives a large centrality to nodes that have small shortest path lengths to 
the other nodes of the graph. Taking into account only shortest paths made up of at most 
$K$ edges, with $1\leq K<n$, the notions of \emph{harmonic $K_{\rm out}$-centrality} 
$h_i^{K_{\rm out}}$, which considers such shortest paths that start at node $i$, and 
\emph{harmonic $K_{\rm in}$-centrality} $h_i^{K_{\rm in}}$, which considers such shortest 
paths that end at node $i$, were introduced in \cite{NR_na}. The global $K$-efficiency 
\eqref{ekg} can be expressed as 
$$
h_i^{K_{\rm out}}=
\sum_{i\ne j}\frac{1}{\ell^{(K)}_{ij}}, \,\;\; h_i^{K_{\rm in}}=
\sum_{i\ne j}\frac{1}{\ell^{(K)}_{ji}}\,, \;\;\; \\
e^{(K)}_{\mathcal{G}}=\frac{1}{n(n-1)}\sum_{i\in\mathcal{V}} h_i^{K_{\rm out}}=
\frac{1}{n(n-1)}\sum_{i\in\mathcal{V}} h_i^{K_{\rm in}}.
$$
The following result shows how we can estimate the global efficiency; cf. equation 
\eqref{e_G}.

\begin{proposition}\label{ineq_e}
The global $K$-efficiency \eqref{ekg} of the graph 
$\mathcal{G}=\langle\mathcal{V},\mathcal{E}\rangle$ with $|\mathcal{V}|=n$ satisfies, for
$1\leq K<n$, 
$$
e^{(1)}_{\mathcal{G}}\leq e^{(2)}_{\mathcal{G}}\leq \cdots \leq e^{(n-1)}_{\mathcal{G}}=
e_{\mathcal{G}}.
$$
\end{proposition}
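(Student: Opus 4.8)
The plan is to reduce the chain of inequalities to a termwise comparison of the reciprocals of the shortest path lengths, using the monotonicity already recorded in Proposition \ref{fnnz3}; the final equality is then immediate from the definitions.

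First I would fix a pair of distinct nodes $i,j\in\mathcal{V}$ and an integer $h$ with $2\leq h\leq n-1$, and show that $\ell_{ij}^{(h)}\leq\ell_{ij}^{(h-1)}$. For $2<h<n-1$ this is exactly the first chain of inequalities in Proposition \ref{fnnz3}. The two boundary values $h=2$ and $h=n-1$ are handled in the same way (or directly: admitting paths with one more edge can only enlarge the set of paths from $i$ to $j$ under consideration, so the minimal number of edges among them cannot increase). In particular $\ell_{ij}^{(n-1)}=\ell_{ij}$, since every path has at most $n-1$ edges, so that $L^{(n-1)}=L$.

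Next, recalling the convention $1/\infty=0$, the map $t\mapsto 1/t$ is nonincreasing on $\{1,2,\dots\}\cup\{\infty\}$, so $\ell_{ij}^{(h)}\leq\ell_{ij}^{(h-1)}$ yields
\[
\frac{1}{\ell_{ij}^{(h-1)}}\leq\frac{1}{\ell_{ij}^{(h)}},\qquad i\neq j.
\]
Summing this inequality over all ordered pairs $(i,j)$ with $i\neq j$ and dividing by $n(n-1)$ gives $e^{(h-1)}_{\mathcal{G}}\leq e^{(h)}_{\mathcal{G}}$ by the definition \eqref{ekg}. Letting $h$ run over $2,3,\dots,n-1$ produces the asserted chain $e^{(1)}_{\mathcal{G}}\leq e^{(2)}_{\mathcal{G}}\leq\cdots\leq e^{(n-1)}_{\mathcal{G}}$. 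Finally, since $L^{(n-1)}=L$, comparing \eqref{ekg} at $K=n-1$ with \eqref{e_G} shows $e^{(n-1)}_{\mathcal{G}}=e_{\mathcal{G}}$.

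There is no serious obstacle here; the only points requiring care are the bookkeeping of the endpoint indices $h=2$ and $h=n-1$ relative to the range $1<h<n-1$ stated in Proposition \ref{fnnz3}, and the consistent use of the convention $1/\infty=0$, so that pairs of nodes joined by no path with at most $K$ edges simply contribute zero and never interfere with the monotonicity.
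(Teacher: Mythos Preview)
Your argument is correct and is essentially the same as the paper's: the paper's proof is the single line ``The proof follows from equations \eqref{ekg} and \eqref{e_G},'' and you have simply spelled out the termwise monotonicity (via Proposition~\ref{fnnz3}) and the $1/\infty=0$ convention that make this immediate.
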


\begin{proof}
The proof follows from equations \eqref{ekg} and \eqref{e_G}. 
\end{proof}

The global $K$-efficiency \eqref{ekg} and global efficiency \eqref{e_G} only depend 
on the $K$-path length matrix and the path length matrix, respectively. The shortest paths
between any two distinct nodes are of particular importance for the communicability of a 
network. With the same number of such paths, the shorter the path length, the larger the 
network efficiency. However, with the same path length, the larger the number of such 
paths, the larger the network connectivity. We therefore introduce novel measures that 
also depend on the $K$-path counter matrix. 

\begin{definition}\label{sh_out}
The \emph{connectivity $K_{\rm out}$-centrality} of a node $i$ is the sum of the number of
the shortest paths made up of at most $K$ edges, with $1\leq K<n$, that start at node $i$ 
and are normalized by their length,
$$
\hat{h}_i^{K_{\rm out}}=
\sum_{i\ne j}\frac{c^{(K)}_{ij}}{\ell^{(K)}_{ij}}, \qquad \forall i\in \mathcal{V}.
$$ 
\end{definition}

\begin{definition}\label{sh_in}
The \emph{connectivity $K_{\rm in}$-centrality} of a node $i$ is the sum of the number of
the shortest paths made up of at most $K$ edges, with $1\leq K<n$, that end at node $i$ 
and are normalized by their length. 
$$
\hat{h}_i^{K_{\rm in}}=
\sum_{i\ne j}\frac{c^{(K)}_{ji}}{\ell^{(K)}_{ji}}, \qquad \forall i\in \mathcal{V}.
$$ 
\end{definition}

\begin{proposition}\label{ineq_hi}
For $1\leq\hat{K}<\tilde{K}<n$, the connectivity $K_{\rm out}$- and $K_{\rm in}$- 
centralities of any node $i$ satisfy the inequalities
\begin{equation} \label{inhhat}
{\rm deg}_i^{\rm out}\leq \hat{h}_i^{\hat{K}_{\rm out}}\leq\hat{h}_i^{\tilde{K}_{\rm out}}
\quad \mathrm{ and } \quad {\rm deg}_i^{\rm in}\leq\hat{h}_i^{\hat{K}_{\rm in}}\leq 
\hat{h}_i^{\tilde{K}_{\rm in}}, \qquad \forall i\in \mathcal{V}.
\end{equation}
\end{proposition}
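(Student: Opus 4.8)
The plan is to establish the two three-term chains separately, carrying out the ``out'' case in detail; the ``in'' case then follows verbatim by applying the ``out'' result to the transposed graph $\langle\mathcal{V},\mathcal{E}^{\mathrm{T}}\rangle$, whose adjacency matrix is $A^{\mathrm{T}}$, whose out-degree of node $i$ is ${\rm deg}_i^{\rm in}$, and whose connectivity $K_{\rm out}$-centrality of node $i$ equals $\hat{h}_i^{K_{\rm in}}$ by Definition~\ref{sh_in}. Within the ``out'' case there are two things to show: the lower bound ${\rm deg}_i^{\rm out}\leq\hat{h}_i^{\hat{K}_{\rm out}}$ and the monotonicity $\hat{h}_i^{\hat{K}_{\rm out}}\leq\hat{h}_i^{\tilde{K}_{\rm out}}$ for $\hat{K}<\tilde{K}$. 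Both reduce to comparing, term by term, the summands $c^{(K)}_{ij}/\ell^{(K)}_{ij}$ that define $\hat{h}_i^{K_{\rm out}}$, and the main tool is Proposition~\ref{fnnz2}. Throughout I use the convention already in force for \eqref{e_G} and \eqref{ckG}, namely that a summand with $\ell^{(K)}_{ij}=\infty$ (hence, by definition, $c^{(K)}_{ij}=0$) contributes $0$; in particular every summand is nonnegative.

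For the lower bound I would simply restrict the defining sum to the out-neighbors of $i$. If $a_{ij}=1$, then $a^{(1)}_{ij}=a_{ij}=1>0$, so $q=1$ is the smallest power for which Proposition~\ref{fnnz2} applies (here we use $\hat{K}\geq1$, and that $\mathcal{G}$ is simple, so there is a single $i\to j$ edge and no walk of length $0$ between distinct nodes); hence $\ell^{(\hat{K})}_{ij}=1$ and $c^{(\hat{K})}_{ij}=1$, and the corresponding summand equals $1$. Discarding all summands of $\hat{h}_i^{\hat{K}_{\rm out}}$ except those indexed by the out-neighbors of $i$ (legitimate since the rest are nonnegative) gives
\[
\hat{h}_i^{\hat{K}_{\rm out}}\;\geq\;\sum_{j\,:\,a_{ij}=1}\frac{c^{(\hat{K})}_{ij}}{\ell^{(\hat{K})}_{ij}}\;=\;\sum_{j\,:\,a_{ij}=1}1\;=\;{\rm deg}_i^{\rm out}.
\]

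For the monotonicity I would fix a pair $i\neq j$ and let $q_{ij}$ denote the smallest positive integer with $a^{(q_{ij})}_{ij}>0$, setting $q_{ij}=\infty$ if no such integer exists. By Proposition~\ref{fnnz2}, for any $K$ with $1\leq K<n$ the summand $c^{(K)}_{ij}/\ell^{(K)}_{ij}$ equals $a^{(q_{ij})}_{ij}/q_{ij}$ when $q_{ij}\leq K$ and equals $0$ otherwise. Thus, as a function of $K$, this summand is nondecreasing: it is $0$ for $K<q_{ij}$ and then takes the fixed nonnegative value $a^{(q_{ij})}_{ij}/q_{ij}$ for all $K\geq q_{ij}$. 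In particular, for $\hat{K}<\tilde{K}$ we have $c^{(\hat{K})}_{ij}/\ell^{(\hat{K})}_{ij}\leq c^{(\tilde{K})}_{ij}/\ell^{(\tilde{K})}_{ij}$, and summing this over all $j\neq i$ yields $\hat{h}_i^{\hat{K}_{\rm out}}\leq\hat{h}_i^{\tilde{K}_{\rm out}}$.

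There is no substantive obstacle here; the argument is bookkeeping built on Proposition~\ref{fnnz2}. The only points that genuinely require care are (i) keeping the $0/\infty$ convention consistent, so that unreachable pairs contribute $0$ rather than something undefined, and (ii) remembering that $c^{(K)}_{ij}$ counts only the \emph{shortest} $i\to j$ paths of length at most $K$ — equivalently $a^{(q_{ij})}_{ij}$ for the minimal reachable length $q_{ij}$ — and not all $i\to j$ paths with at most $K$ edges. It is precisely this that makes both $\ell^{(K)}_{ij}$ and $c^{(K)}_{ij}$ independent of $K$ once $K\geq q_{ij}$, which is what the termwise comparison exploits.
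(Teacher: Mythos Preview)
Your proof is correct and follows essentially the same approach as the paper: a termwise comparison of the summands $c^{(K)}_{ij}/\ell^{(K)}_{ij}$. The only cosmetic differences are that the paper first observes the equality $\hat{h}_i^{1_{\rm out}}={\rm deg}_i^{\rm out}$ (and likewise for ``in'') and then invokes Proposition~\ref{fnnz3} for the monotonicity, whereas you bound by the degree directly and draw the monotonicity from Proposition~\ref{fnnz2}; since Proposition~\ref{fnnz3} is itself a consequence of Proposition~\ref{fnnz2}, this is the same argument unpacked one level further.
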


\begin{proof}
It is straightforward to observe that $\hat{h}_i^{1_{\rm out}} = {\rm deg}_i^{\rm out}$ 
and $ \hat{h}_i^{1_{\rm in}} ={\rm deg}_i^{\rm in}$. The proof now follows from 
Definitions \ref{sh_out} and \ref{sh_in}, and Proposition \ref{fnnz3}.
\end{proof}

Let the \emph{$\hat{h}^{K_{\rm in}}$-center} of a graph $\mathcal{G}$ be the set of all 
nodes with the largest connectivity $K_{\rm in}$-centrality and the 
\emph{$\hat{h}^{K_{\rm out}}$-center} be the set of all nodes with the largest 
connectivity $K_{\rm out}$-centrality. 

\begin{remark}\label{rmk_2}
Generally, any edge from a node in the $\hat{h}^{K_{\rm out}}$-center to a node in the
$\hat{h}^{K_{\rm in}}$-center is easily replaceable by a short geodesic since it is 
located in a part of the graph with many shortest paths.
\end{remark}

Several centrality measures have been proposed for edges of complex networks; see, e.g., 
\cite{DMR2,DMR3,GN,NR}. The above comment suggests the following notion of edge centrality
in geodesic connectivity.

\begin{definition}\label{esh}
The \emph{edge $K$-connectivity centrality} of an edge $e(i\rightarrow j)$ is the product
of the connectivity $K_{\rm out}$-centrality of node $i$ and the connectivity 
$K_{\rm in}$-centrality of node $j$:
\begin{equation*}\label{ecc}
\hat{h}_{i,j}^{(K)}=\hat{h}_i^{K_{\rm out}}\,\hat{h}_j^{K_{\rm in}}, \qquad \forall 
e(i\rightarrow j)\in \mathcal{E}.
\end{equation*}
\end{definition}

We refer to $\hat{h}_{i,j}=\hat{h}_{i,j}^{(n-1)}$ as the 
\emph{edge connectivity centrality} of edge $e(i\rightarrow j)$. The following result 
holds for this centrality measure.

\begin{proposition}\label{ineq_he}
For $1\leq K<n$, the edge $K$-connectivity centrality of an edge 
$e(i\rightarrow j)\in \mathcal{E}$ satisfies the inequalities
$$
{\rm deg}_i^{\rm out}{\rm deg}_j^{\rm in}=\hat{h}_{i,j}^{(1)}\leq \hat{h}_{i,j}^{(2)}\leq 
\cdots \leq \hat{h}_{i,j}^{(n-1)}=\hat{h}_{i,j}.
$$
\end{proposition}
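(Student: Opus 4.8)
The plan is to reduce Proposition~\ref{ineq_he} to the already-established monotonicity of the connectivity $K_{\rm out}$- and $K_{\rm in}$-centralities. By Definition~\ref{esh}, for each $K$ with $1\le K<n$ we have $\hat{h}_{i,j}^{(K)}=\hat{h}_i^{K_{\rm out}}\,\hat{h}_j^{K_{\rm in}}$, so the claim is a statement about a product of two monotone nonnegative sequences.

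First I would establish the endpoint identities. For $K=1$, the proof of Proposition~\ref{ineq_hi} already records that $\hat{h}_i^{1_{\rm out}}={\rm deg}_i^{\rm out}$ and $\hat{h}_j^{1_{\rm in}}={\rm deg}_j^{\rm in}$, which gives $\hat{h}_{i,j}^{(1)}={\rm deg}_i^{\rm out}{\rm deg}_j^{\rm in}$. For $K=n-1$, the definition $\hat{h}_{i,j}=\hat{h}_{i,j}^{(n-1)}$ stated just before the proposition gives the right-hand equality. Then I would invoke Proposition~\ref{ineq_hi} in the form: for $1\le \hat K<\tilde K<n$, both $\hat{h}_i^{\hat K_{\rm out}}\le \hat{h}_i^{\tilde K_{\rm out}}$ and $\hat{h}_j^{\hat K_{\rm in}}\le \hat{h}_j^{\tilde K_{\rm in}}$ hold, and all four quantities are nonnegative (indeed bounded below by the relevant degrees, which are $\ge 0$). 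Multiplying two inequalities between nonnegative reals preserves the inequality, so $\hat{h}_{i,j}^{(\hat K)}=\hat{h}_i^{\hat K_{\rm out}}\hat{h}_j^{\hat K_{\rm in}}\le \hat{h}_i^{\tilde K_{\rm out}}\hat{h}_j^{\tilde K_{\rm in}}=\hat{h}_{i,j}^{(\tilde K)}$. Applying this with $\tilde K=\hat K+1$ for $\hat K=1,\dots,n-2$ yields the chain of inequalities in the statement.

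I do not anticipate a genuine obstacle here; the only point requiring a word of care is the nonnegativity needed to multiply the two inequalities, but this is immediate since $c^{(K)}_{ij}\ge 0$ and $\ell^{(K)}_{ij}>0$ for $i\ne j$, so each summand in Definitions~\ref{sh_out} and~\ref{sh_in} is nonnegative, hence so are $\hat{h}_i^{K_{\rm out}}$ and $\hat{h}_j^{K_{\rm in}}$. (One could alternatively note they are bounded below by ${\rm deg}_i^{\rm out}\ge 0$ and ${\rm deg}_j^{\rm in}\ge 0$ via Proposition~\ref{ineq_hi}.) The proof is therefore short: cite the endpoint evaluations, cite Proposition~\ref{ineq_hi}, and observe that the product of nonnegative monotone factors is monotone.

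\begin{proof}
By Definition~\ref{esh}, $\hat{h}_{i,j}^{(K)}=\hat{h}_i^{K_{\rm out}}\,\hat{h}_j^{K_{\rm in}}$ for $1\le K<n$. For $K=1$, the proof of Proposition~\ref{ineq_hi} shows $\hat{h}_i^{1_{\rm out}}={\rm deg}_i^{\rm out}$ and $\hat{h}_j^{1_{\rm in}}={\rm deg}_j^{\rm in}$, so $\hat{h}_{i,j}^{(1)}={\rm deg}_i^{\rm out}{\rm deg}_j^{\rm in}$; for $K=n-1$ we have $\hat{h}_{i,j}^{(n-1)}=\hat{h}_{i,j}$ by definition. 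By Proposition~\ref{ineq_hi}, for $1\le \hat K<\tilde K<n$ the factors satisfy $0\le\hat{h}_i^{\hat K_{\rm out}}\le\hat{h}_i^{\tilde K_{\rm out}}$ and $0\le\hat{h}_j^{\hat K_{\rm in}}\le\hat{h}_j^{\tilde K_{\rm in}}$, all being nonnegative since each summand $c^{(K)}_{ij}/\ell^{(K)}_{ij}$ in Definitions~\ref{sh_out}--\ref{sh_in} is nonnegative. Multiplying these two inequalities between nonnegative reals gives $\hat{h}_{i,j}^{(\hat K)}\le\hat{h}_{i,j}^{(\tilde K)}$. Taking $\hat K=K$ and $\tilde K=K+1$ for $K=1,\dots,n-2$ yields the asserted chain $\hat{h}_{i,j}^{(1)}\le\hat{h}_{i,j}^{(2)}\le\cdots\le\hat{h}_{i,j}^{(n-1)}$.
\end{proof}
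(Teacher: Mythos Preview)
Your proof is correct and follows exactly the paper's approach: the paper's proof simply states that the result follows from Proposition~\ref{ineq_hi} and Definition~\ref{esh}, and your argument spells out precisely this reduction, including the endpoint identifications and the observation that the product of nonnegative monotone factors is monotone.
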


\begin{proof}
The proof follows from Proposition \ref{ineq_hi} and Definition \ref{esh}.
\end{proof}

The following results show how we can estimate global connectivity; cf. equation 
\eqref{c_G}.

\begin{proposition}\label{eq_c}
The global $K$-connectivity, for $1\leq K<n$, satisfies 
$$
c^{(K)}_{\mathcal{G}}=\frac{1}{n(n-1)}\sum_{i\in\mathcal{V}} \hat{h}_i^{K_{\rm out}}=
\frac{1}{n(n-1)}\sum_{i\in\mathcal{V}} \hat{h}_i^{K_{\rm in}}.
$$
\end{proposition}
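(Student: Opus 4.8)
The plan is to obtain both equalities directly from the definition \eqref{ckG} of the global $K$-connectivity by nothing more than rearranging the double sum over ordered pairs of distinct nodes; no new ideas are needed beyond Definitions \ref{sh_out} and \ref{sh_in}.

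First I would start from
\[
c^{(K)}_{\mathcal{G}}=\frac{1}{n(n-1)}\sum_{i,j\ne i}\frac{c^{(K)}_{ij}}{\ell^{(K)}_{ij}}
\]
and regard the sum over ordered pairs $(i,j)$ with $i\ne j$ as an outer sum over $i\in\mathcal{V}$ and, for each such $i$, an inner sum over $j\ne i$. By Definition \ref{sh_out}, the inner sum $\sum_{j\ne i} c^{(K)}_{ij}/\ell^{(K)}_{ij}$ is exactly $\hat{h}_i^{K_{\rm out}}$, which gives the first equality. It is worth remarking that every summand is well defined: whenever $\ell^{(K)}_{ij}=\infty$ we also have $c^{(K)}_{ij}=0$ by Proposition \ref{fnnz2}, and with the convention $1/\infty=0$ the corresponding term simply vanishes, so no term is indeterminate.

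For the second equality I would group the same double sum by its second index. Since the interchange of the summation labels $i\leftrightarrow j$ leaves the index set $\{(i,j):i\ne j\}$ invariant, one has $\sum_{i,j\ne i} c^{(K)}_{ij}/\ell^{(K)}_{ij}=\sum_{i,j\ne i} c^{(K)}_{ji}/\ell^{(K)}_{ji}$. Now taking $i$ as the outer index and summing over $j\ne i$, the inner sum $\sum_{j\ne i} c^{(K)}_{ji}/\ell^{(K)}_{ji}$ equals $\hat{h}_i^{K_{\rm in}}$ by Definition \ref{sh_in}, so $c^{(K)}_{\mathcal{G}}=\frac{1}{n(n-1)}\sum_{i\in\mathcal{V}}\hat{h}_i^{K_{\rm in}}$.

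There is essentially no obstacle here: the argument is just a reindexing of a finite sum, and the only subtlety — the well-definedness of the terms when some entries of $L^{(K)}$ are infinite — is disposed of by the stated convention together with Proposition \ref{fnnz2}. This parallels exactly the identity for $e^{(K)}_{\mathcal{G}}$ in terms of the harmonic $K_{\rm out}$- and $K_{\rm in}$-centralities recorded just before Proposition \ref{ineq_e}.
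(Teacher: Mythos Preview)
Your proposal is correct and follows the same approach as the paper: the paper simply states that the result follows from equation \eqref{ckG} together with Definitions \ref{sh_out} and \ref{sh_in}, and your argument is precisely the spelled-out version of that reindexing. Your added remark about well-definedness when $\ell^{(K)}_{ij}=\infty$ is a harmless clarification that the paper leaves implicit.
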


\begin{proof}
The proof follows from equation \eqref{ckG}, and Definitions \ref{sh_out} and 
\ref{sh_in}.
\end{proof}

\begin{proposition}\label{ineq_c}
The global $K$-connectivity of $\mathcal{G}=\langle\mathcal{V},\mathcal{E}\rangle$, with
$|\mathcal{V}|=n$ and $|\mathcal{E}|=m$, satisfies, for $1\leq K<n$, 
$$
\frac{m}{n(n-1)}=c^{(1)}_{\mathcal{G}}\leq c^{(2)}_{\mathcal{G}}\leq \cdots \leq 
c^{(n-1)}_{\mathcal{G}}=c_{\mathcal{G}}.
$$
\end{proposition}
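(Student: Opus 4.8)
The plan is to reduce everything to Proposition \ref{eq_c}, which already expresses the global $K$-connectivity as the average of the connectivity $K_{\rm out}$-centralities over all nodes, namely $c^{(K)}_{\mathcal{G}}=\frac{1}{n(n-1)}\sum_{i\in\mathcal{V}}\hat{h}_i^{K_{\rm out}}$. Once we have this, the chain of inequalities becomes a termwise comparison: for each fixed $i$, Proposition \ref{ineq_hi} gives ${\rm deg}_i^{\rm out}\leq\hat{h}_i^{1_{\rm out}}\leq\hat{h}_i^{2_{\rm out}}\leq\cdots\leq\hat{h}_i^{(n-1)_{\rm out}}$, and summing a monotone inequality over $i$ preserves it. So the middle inequalities $c^{(1)}_{\mathcal{G}}\leq c^{(2)}_{\mathcal{G}}\leq\cdots\leq c^{(n-1)}_{\mathcal{G}}$ follow immediately, and $c^{(n-1)}_{\mathcal{G}}=c_{\mathcal{G}}$ is just the definition of $c_{\mathcal{G}}$ as $c^{(n-1)}_{\mathcal{G}}$ together with Proposition \ref{fnnz3} (which guarantees $c^{(n-1)}_{ij}=c_{ij}$ and $\ell^{(n-1)}_{ij}=\ell_{ij}$ for all $i\ne j$).

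The only point needing a genuine (if short) argument is the left endpoint: $c^{(1)}_{\mathcal{G}}=\frac{m}{n(n-1)}$. First I would unwind the definition \eqref{ckG} at $K=1$. By Proposition \ref{fnnz2} with $K=1$, we have $\ell^{(1)}_{ij}=1$ and $c^{(1)}_{ij}=a_{ij}$ whenever $a_{ij}=1$ (there is exactly one shortest path of length $1$, namely the edge itself), while $\ell^{(1)}_{ij}=\infty$ and $c^{(1)}_{ij}=0$ whenever $a_{ij}=0$. In either case the summand $c^{(1)}_{ij}/\ell^{(1)}_{ij}$ equals $a_{ij}$ (using $0/\infty=0$). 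Hence $\sum_{i,j\ne i}c^{(1)}_{ij}/\ell^{(1)}_{ij}=\sum_{i,j\ne i}a_{ij}=\sum_{i,j}a_{ij}=m$, since the diagonal entries of $A$ vanish for a simple graph and the total number of ones in the adjacency matrix counts the edges (each directed edge once; for an undirected graph each undirected edge is stored as two ones, which is the convention consistent with $|\mathcal{E}|=m$ counting the stored entries — I would make sure this matches the paper's convention, but it is the same bookkeeping already used implicitly in ${\rm deg}_i^{\rm out}=\sum_k a_{ik}$). Dividing by $n(n-1)$ gives $c^{(1)}_{\mathcal{G}}=m/(n(n-1))$. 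Alternatively, and perhaps more cleanly, I would invoke Proposition \ref{eq_c} at $K=1$ together with the identity $\hat{h}_i^{1_{\rm out}}={\rm deg}_i^{\rm out}$ from the proof of Proposition \ref{ineq_hi}, so that $c^{(1)}_{\mathcal{G}}=\frac{1}{n(n-1)}\sum_{i\in\mathcal{V}}{\rm deg}_i^{\rm out}=\frac{m}{n(n-1)}$, using $\sum_i{\rm deg}_i^{\rm out}=m$.

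There is no real obstacle here; the result is essentially a corollary of Propositions \ref{eq_c} and \ref{ineq_hi}. The one thing to be careful about is the edge-counting convention in the base case — whether $m$ counts undirected edges once or the symmetric adjacency entries twice — and to state the $0/\infty=0$ convention explicitly so the $K=1$ evaluation of \eqref{ckG} is unambiguous. Everything else is a one-line appeal to monotonicity of summation.
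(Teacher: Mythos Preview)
Your proposal is correct and matches the paper's proof essentially line for line: the paper establishes the base case via $\sum_{i\in\mathcal{V}}\hat{h}_i^{1_{\rm out}}=\sum_{i\in\mathcal{V}}{\rm deg}_i^{\rm out}=m$ (your ``cleaner'' alternative) and then invokes the monotonicity inequality \eqref{inhhat} from Proposition~\ref{ineq_hi} together with Proposition~\ref{eq_c} for the chain. Your caution about the edge-counting convention is reasonable but not raised in the paper, which simply uses $\sum_{i\ne j}a_{ij}=m$ without further comment.
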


\begin{proof}
We observe that 
$$
\sum_{i\in\mathcal{V}}\hat{h}_i^{1_{\rm in}}=\sum_{i\in\mathcal{V}}{\rm deg}_i^{\rm in}=
\sum_{i\ne j\in\mathcal{V}} a_{ij}=m \quad \quad \mathrm{ and } \quad \quad
\sum_{i\in\mathcal{V}}\hat{h}_i^{1_{\rm out}}=\sum_{i\in\mathcal{V}} {\rm deg}_i^{\rm out}
=\sum_{i\ne j\in\mathcal{V}} a_{ij}=m.
$$
The result now follows from \eqref{inhhat} and Proposition \ref{eq_c}.
\end{proof}

\subsection{Computational complexity}
Let the graph ${\cal G}$ have $n$ nodes. For each $K=2,3,\ldots,n-1$, the evaluation of 
the matrices $C^{(K)}$ and $L^{(K)}$ requires $O(n^3)$ arithmetic floating point 
operations (flops) due to the required multiplication of $n \times n$ matrices. Computing 
the connectivity $K_{\rm in}$- or $K_{\rm out}$-centralities demands $O(n^2)$ flops and 
the calculation of the global $K$-connectivity costs $O(n)$ flops; see Proposition 
\ref{eq_c}.

If one also would like to compute edge connectivity centralities, then this is fairly
inexpensive. The evaluation of the connectivity $K_{\rm in}$- and  
$K_{\rm out}$-centralities requires $O(n^2)$ flops, and the computation the edge 
connectivity centralities requires $m$ flops.

\section{Edge $K$-betweenness centrality versus edge $K$-connectivity centrality}\label{s4}
The betweenness centrality of any node $i$ is defined as the sum over all ordered pairs of
distinct nodes $(h,k)$ of the fraction of shortest paths from node $h$ to node $k$ that 
pass through node $i$ (where $i$ is not an endpoint); see, e.g.,
\cite{estrada2011structure,Fr,newman2010}.

The edge betweenness of any edge $e(i\rightarrow j)$, denoted as $\hat{b}_{i,j}$, is 
defined as the sum over all ordered pairs of distinct nodes $(h,k)$ of the fraction of 
shortest paths from node $h$ to node $k$ that pass through the edge \cite{GN}. The removal
of an edge with high betweenness centrality may affect the communication between many 
pairs of nodes through the shortest paths between them. To capture the role of an edge in
enabling short-range communication in the network, we introduce the notion of edge 
$K$-betweenness centrality, where $K$ is a positive integer such that $1\leq K<n$.

\begin{definition}\label{kb}
The \emph{edge $K$-betweenness centrality} of an edge $e(i\rightarrow j)$, denoted as 
$\hat{b}_{i,j}^{(K)}$, is defined as the sum over all ordered pairs of distinct nodes 
$(h,k)$ of the fraction of shortest paths from node $h$ to node $k$, whose length is 
less than or equal to $K$, that pass through the edge.
\end{definition}

The following result holds.

\begin{proposition}\label{ineq_he2}
For $1\leq K<n$, the edge $K$-betweenness centrality of any edge 
$e(i\rightarrow j)\in\mathcal{E}$ satisfies 
$$
1=\hat{b}_{i,j}^{(1)}\leq \hat{b}_{i,j}^{(2)}\leq \cdots \leq \hat{b}_{i,j}^{(n-1)}=
\hat{b}_{i,j}.
$$
\end{proposition}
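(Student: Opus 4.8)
The plan is to mimic the monotonicity arguments already used for the edge $K$-connectivity centrality (Proposition \ref{ineq_he}) and for the global $K$-efficiency (Proposition \ref{ineq_e}), but now tracking the count of shortest paths that ``pass through'' a fixed edge rather than the reciprocal lengths. First I would fix the edge $e(i\rightarrow j)$ and, for every ordered pair of distinct nodes $(h,k)$, let $\sigma_{hk}$ denote the number of shortest paths from $h$ to $k$ and $\sigma_{hk}(e)$ the number of those that use the edge $e(i\rightarrow j)$, with the convention that the ratio $\sigma_{hk}(e)/\sigma_{hk}$ is $0$ when $\sigma_{hk}=0$. By Definition \ref{kb}, $\hat{b}_{i,j}^{(K)}=\sum_{(h,k)} [\ell_{hk}^{(K)}\le K]\,\sigma_{hk}(e)/\sigma_{hk}$, i.e., the same sum but restricted to those pairs whose shortest-path length does not exceed $K$.

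The key observation is that by Proposition \ref{fnnz3}, the quantities $\ell_{hk}^{(K)}$ and $c_{hk}^{(K)}$ stabilize to $\ell_{hk}$ and $c_{hk}$ once $K$ reaches the geodesic distance, and in particular the set of \emph{shortest} paths counted at level $K$ coincides with the full set of shortest paths as soon as $\ell_{hk}\le K$; for $\ell_{hk}>K$ the pair contributes nothing. Hence, as $K$ increases from $\hat{K}$ to $\tilde{K}$, the pairs $(h,k)$ that were already active (those with $\ell_{hk}\le\hat{K}$) contribute exactly the same term $\sigma_{hk}(e)/\sigma_{hk}\in[0,1]$ at both levels, while pairs with $\hat{K}<\ell_{hk}\le\tilde{K}$ newly contribute a nonnegative term. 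This gives $\hat{b}_{i,j}^{(\hat{K})}\le\hat{b}_{i,j}^{(\tilde{K})}$, and taking $\hat{K}=1,2,\dots$ yields the chain of inequalities. The top end $\hat{b}_{i,j}^{(n-1)}=\hat{b}_{i,j}$ is immediate since every path has length at most $n-1$, so the length restriction is vacuous.

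For the base case I would argue that $\hat{b}_{i,j}^{(1)}=1$: the only pairs $(h,k)$ with $\ell_{hk}^{(1)}\le 1$ are those joined by a direct edge, and among these only the pair $(i,j)$ itself has a shortest path through the edge $e(i\rightarrow j)$ — indeed the unique length-one shortest path from $i$ to $j$ is that very edge, so $\sigma_{ij}(e)/\sigma_{ij}=1$, while for any other directly connected pair $(h,k)$ the length-one shortest path is the single edge $e(h\rightarrow k)\ne e(i\rightarrow j)$, contributing $0$. Summing gives $\hat{b}_{i,j}^{(1)}=1$.

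I expect the only delicate point to be the bookkeeping behind the claim that ``the set of shortest paths counted at level $K$ equals the full set of shortest paths whenever $\ell_{hk}\le K$,'' which is really a restatement of Proposition \ref{fnnz3}: once $K\ge\ell_{hk}$, no path of length $>K$ can be a \emph{shortest} $h$--$k$ path, so restricting to length $\le K$ discards only non-shortest paths and leaves both $\sigma_{hk}$ and $\sigma_{hk}(e)$ unchanged. With that clarified, the monotonicity is just a term-by-term comparison of two finite nonnegative sums, entirely parallel to the proof of Proposition \ref{ineq_e}, and no further estimates are needed.
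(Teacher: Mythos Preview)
Your proposal is correct and follows essentially the same approach as the paper, which simply says the result ``follows from Definition \ref{kb} and by observing that, when $K=1$, any edge has $K$-betweenness centrality $1$.'' You have merely spelled out the monotonicity argument (term-by-term comparison over pairs $(h,k)$ as the index set grows) and the base-case computation that the paper leaves implicit.
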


\begin{proof}
The proof follows from Definition \ref{kb} and by observing that, when $K=1$, any edge 
has $K$-betweenness centrality $1$.
\end{proof}

\begin{example}\label{ex1}
Consider the graph $\cal{G}$ with $n=5$ nodes and $m=12$ edges depicted in Figure 
\ref{fig1}. The adjacency matrix $A$ as well as the $K$-path length and $K$-path counter 
matrices for $K=1$ and $K=2$ are given by
\begin{equation*}
A=\begin{bmatrix} 
0 \phantom{0} & 1 \phantom{0}& 1 \phantom{0}& 0 \phantom{0}& 0 \\ 
1 \phantom{0}& 0 \phantom{0}& 0 \phantom{0}& 1 \phantom{0}& 0\\ 
1\phantom{0} & 1\phantom{0} & 0\phantom{0} & 1\phantom{0} & 1\\ 
0\phantom{0} & 1\phantom{0} & 1\phantom{0} & 0\phantom{0} & 1\\ 
 0\phantom{0} & 0\phantom{0} & 1\phantom{0} & 0\phantom{0} & 0
\end{bmatrix},
\quad
L^{(1)}=\begin{bmatrix} 
0 & 1 & 1 & \infty &  \infty \\ 
1 & 0 &  \infty& 1 &  \infty\\ 
1 & 1 & 0 & 1 & 1\\ 
 \infty & 1 & 1 & 0 & 1\\ 
  \infty & \infty & 1 &  \infty & 0
 \end{bmatrix},
\quad
C^{(1)}=\begin{bmatrix} 
1\phantom{0} & 1\phantom{0} & 1\phantom{0} & 0\phantom{0} & 0 \\ 
1 \phantom{0}& 1\phantom{0} & 0\phantom{0} & 1\phantom{0} & 0\\ 
1\phantom{0} & 1\phantom{0} & 1\phantom{0} & 1\phantom{0} & 1\\ 
0 \phantom{0}& 1\phantom{0} & 1\phantom{0} & 1\phantom{0} & 1\\ 
 0\phantom{0} &0\phantom{0} & 1\phantom{0} & 0\phantom{0} & 1
 \end{bmatrix},
\end{equation*}
\begin{equation*}
L=L^{(2)}=\begin{bmatrix} 
0\phantom{0} & 1\phantom{0} & 1\phantom{0} & 2\phantom{0} & 2 \\ 
1\phantom{0} & 0\phantom{0} & 2\phantom{0} & 1\phantom{0} & 2\\ 
1\phantom{0} & 1\phantom{0} & 0\phantom{0} & 1\phantom{0} & 1\\ 
2 \phantom{0}& 1\phantom{0} & 1\phantom{0} & 0\phantom{0} & 1\\ 
 2\phantom{0} & 2\phantom{0} & 1\phantom{0} & 2\phantom{0} & 0
\end{bmatrix},
\quad
C=C^{(2)}=\begin{bmatrix} 
1\phantom{0} & 1\phantom{0} & 1\phantom{0} & 2\phantom{0} & 1 \\ 
1\phantom{0} & 1\phantom{0} & 2\phantom{0} & 1\phantom{0} & 1\\ 
1\phantom{0} & 1\phantom{0} & 1\phantom{0} & 1\phantom{0} & 1\\ 
2\phantom{0} & 1\phantom{0} & 1\phantom{0} & 1\phantom{0} & 1\\ 
 1\phantom{0} & 1\phantom{0} & 1\phantom{0} & 1\phantom{0} & 1
\end{bmatrix}.
\end{equation*}
It is immediate to see that $\cal{G}$ is connected and has diameter $d_{\cal{G}}=2$; see 
\eqref{d_G}. One can easily determine the global efficiency \eqref{e_G} and global 
connectivity \eqref{c_G} of $\cal{G}$. We have $e_{\cal{G}} = 0.800$ and 
$c_{\cal{G}} = 0.875$. A rough lower bound for $c_{\cal{G}}$ can be computed by using 
Proposition \ref{ineq_c}. We obtain $c^{(1)}_{\mathcal{G}}=\frac{m}{n(n-1)}=0.6$. The edge
betweenness centralities are reported in Figure \ref{fig1} (left) and the edge 
connectivity centralities in Figure \ref{fig1} (right). 

Let $\tilde{\cal{G}}={\cal{G}}_{ij}$ denote the graph obtained after removing the edge 
$e(i\rightarrow j)$ from ${\cal{G}}$. Table \ref{tab1} summarizes the connectivity losses
incurred by removing selected edges. We observe that the effect on the global connectivity
of removing edge $e(2\rightarrow 1)$ is underestimated by its edge betweenness centrality. 
Conversely, for all the other edge removals, we note that the larger the betweenness 
centrality of the removed edge, the more the diameter of the graph $\tilde{\cal{G}}$ 
increases and the global connectivity $c_{\tilde{\cal{G}}}$ decreases. Table \ref{tab1} 
shows the decrease of the latter measure to correspond to that of the estimate
$c_{\tilde{\cal{G}}}^{(2)}$, which only takes the shortest paths into account that are 
made up of at most $2$ edges, when the diameter is $3$. The table also shows the decrease 
of $c_{\tilde{\cal{G}}}$ to correspond to the decrease of both estimates 
$c_{\tilde{\cal{G}}}^{(2)}$ and $c_{\tilde{\cal{G}}}^{(3)}$ when the graph is 
disconnected. Moreover, for edges with the same betweenness centrality, the risk of 
disconnecting the graph by removing an edge is approximately inversely proportional to the
connectivity centrality of the edge; see Remark \ref{rmk_2}.

Recall that all edges have the same $1$-betweenness centrality, equal to $1$, while their 
$1$-connectivity centrality is given by the product of the out-degree of the node from 
which they start and of the in-degree of the node at which they end; see Propositions 
\ref{ineq_he}-\ref{ineq_he2}. Therefore, a rough approach to gain insight into which edge 
removals may result in a disconnected graph is to look for the smallest product. For
the present example this approach identifies the edge $e(5\rightarrow 3)$ with 
${\rm deg}_5^{\rm out}= 1$ and ${\rm deg}_3^{\rm in}=3$. This is in agreement with the 
previous analysis; see Table \ref{tab1}.

\begin{figure}
\centering
\begin{tabular}{cc}
\includegraphics[scale = 0.4]{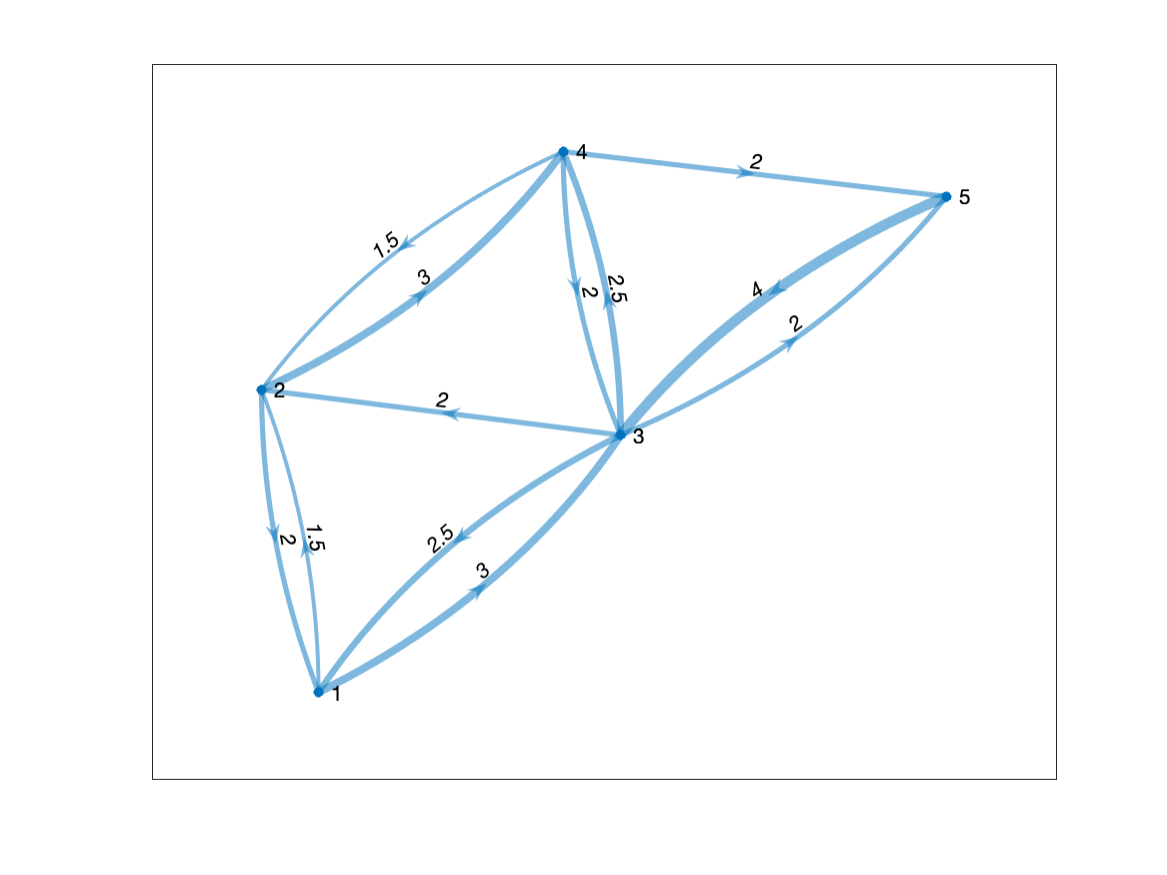} 
\includegraphics[scale = 0.4]{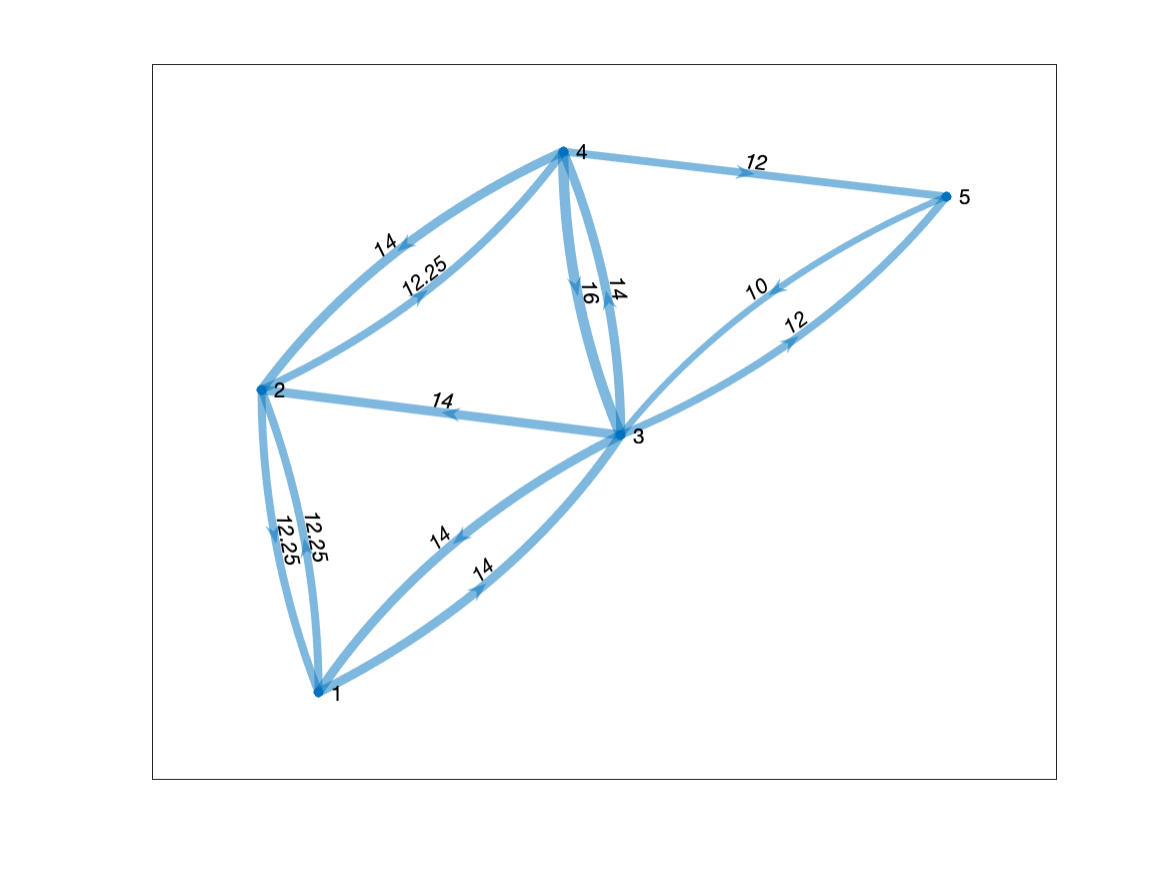} 
\end{tabular}
\caption{Graph considered in Examples \ref{ex1} and \ref{ex2}. Edge betweenness 
centralities (left) and edge connectivity centralities (right) are reported and displayed 
with proportional edge widths.}
\label{fig1}
\end{figure}              

\begin{table}[h!tb]
\centering
\begin{tabular}{c | c c | l | l l l | l l l}
$\tilde{\cal{G}}$&$\hat{b}_{ij}$ & $\hat{h}_{ij}$ & $d_{\tilde{\cal{G}}}$ &$c_{\tilde{\cal{G}}}^{(2)}$  & $c_{\tilde{\cal{G}}}^{(3)}$ &$c_{\tilde{\cal{G}}}$&$e_{\tilde{\cal{G}}}^{(2)}$  & $e_{\tilde{\cal{G}}}^{(3)}$ &$e_{\tilde{\cal{G}}}$\\
\hline
${\cal{G}}_{53}$  & 4.0 & $10.00$ & $\infty$&   $0.7500 $  &   $0.7500 $ & $0.7500$ &   $0.6750 $  &   $0.6750$ & $0.6750$ \ \\
\hline
${\cal{G}}_{13}$  & 3.0 & $14.00$ & $3$&   $0.7500$  & ${0.7833}$ & ${0.7833}$ &   $0.7250 $  &   $0.7583 $ & $0.7583$\ \\
${\cal{G}}_{24}$  & 3.0 & $12.25$ & $3$&   $0.7500$  & ${0.7833}$ & ${0.7833}$ &   $0.7250$  &   $0.7583 $ & $0.7583$ \\
${\cal{G}}_{34}$  & 2.5 & $14.00$ & $3$&   $0.8000$  & $0.8167$ & $0.8167$ &   $0.7500 $  &   $0.7667 $ & $0.7667$\ \\
${\cal{G}}_{31}$  & 2.5 & $14.00$ & $3$&   $0.8000$  & $0.8167$ & $0.8167$&   $0.7500 $  &   $0.7667$ & $0.7667$\ \\
${\cal{G}}_{35}$  & 2.0 & $12.00$ & $3$&   $0.8250$  & $0.8583$ & $0.8583$&   $0.7500 $  &   $0.7667 $ & $0.7667$\ \\
${\cal{G}}_{45}$  & 2.0 & $12.00$ & $3$&   $0.8250 $  & $0.8583$ & $0.8583$&   $0.7500 $  &   $0.7667 $ & $ 0.7667$\ \\
${\cal{G}}_{21}$  & 2.0 & $12.25$ & $3$&   $0.7750$  & $ 0.7917$ & $ 0.7917$&   $0.7500 $  &   $0.7667 $ & $0.7667$\ \\
${\cal{G}}_{32}$  & 2.0 & $14.00$ & $3$&   $0.8500$  & $0.8833$ & $0.8833$&   $0.7500 $  &   $0.7667 $ & $0.7667$ \ \\
\hline
${\cal{G}}_{43}$  & 2.0 & $16.00$ & $2$& $0.8000$  & $0.8000$ & $0.8000$&   $0.7750 $  &   $0.7750 $ & $0.7750$\ \\
${\cal{G}}_{12}$  & 1.5 & $12.25$ & $2$&   $0.8250$  & $0.8250$ & $0.8250$&   $0.7750 $  &   $0.7750$ & $0.7750$\ \\
${\cal{G}}_{42}$  & 1.5 & $14.00$ & $2$&   $0.8250$  & $0.8250$ & $0.8250$&   $0.7750$  &   $0.7750 $ & $0.7750$\ \\
\end{tabular}
\caption{Examples \ref{ex1} and \ref{ex2}. Betweenness and connectivity centralities of each 
removed edge and relevant measures for the resulting graphs, i.e., the diameter, global 
$K$-connectivity, and global $K$-efficiency for $2\leq K\leq 4$.} 
\label{tab1}
\end{table}
\end{example}

\subsection{Global network connectivity after edge-removals}\label{4.1}
Solely focusing on the amount of decrease of the global connectivity caused by the removal
of an edge may be misleading. In fact, this metric might not be consistent with 
expectations, because in some complex networks it may happen that by removing an edge 
$e(i\rightarrow j)$, the global connectivity of the graph 
$\tilde{\cal{G}}={\cal{G}} \backslash\{e(i\rightarrow j)\}$ so obtained is larger than the
global connectivity of the original graph ${\cal{G}}$. {  This phenomenon is known as
the Braess paradox \cite{Br} and has recently been discussed by Altafini et al. 
\cite{ABCMP}.} For example, the following scenario may occur:
\begin{itemize}
\item 
The graph $\tilde{\cal{G}}$ has several shortest paths from node $i$ to node $j$, and 
their number is larger than their length, i.e., $\tilde{c}_{ij} >\tilde{\ell}_{ij}>1$, 
while for ${\cal{G}}$ one has $c_{ij}=\ell_{ij}=1$. Then
\[
\frac{\tilde{c}_{ij}}{\tilde{\ell}_{ij}}>1= \frac{c_{ij}}{\ell_{ij}}.
\]
\item 
All $c_{hk}$ shortest paths in ${\cal{G}}$ from node $h$ to node $k$, for some $(h,k)$, 
include the edge $e(i\rightarrow j)$ and do not contain any of the edges that form the 
$\tilde{c}_{ij}$ shortest paths above. Then the shortest paths from node $h$ to node $k$ 
in $\tilde{\cal{G}}$ replace $e(i\rightarrow j)$ with the $\tilde{c}_{ij}$ shortest paths 
above. This yields $\tilde{c}_{hk}=c_{hk}\cdot \tilde{c}_{ij}$ shortest paths of length 
$\tilde{\ell}_{hk}=\ell_{hk}-1+\tilde{\ell}_{ij}$. Therefore, we have 
\[
\frac{\tilde{c}_{hk}}{\tilde{\ell}_{hk}}>\frac{c_{hk}}{\ell_{hk}},
\]
since
\[
\frac{\tilde{c}_{ij}}{\ell_{hk}-1+\tilde{\ell}_{ij}}>\frac{1}{\ell_{hk}},
\]
and $(\tilde{c}_{ij}-1) \ell_{hk}>\tilde{\ell}_{ij}-1$, because 
$\tilde{c}_{ij} >\tilde{\ell}_{ij}>1$ and $\ell_{hk}>0$. 
\end{itemize}

The following result shows how we can check whether the component of global connectivity 
exclusively due to shortest paths of length $2$ is larger for the graph $\tilde{\cal{G}}$ 
than for the graph ${\cal{G}}$.

\begin{proposition}
The global connectivity component due to the shortest paths of length $2$ for the graph 
$\tilde{\cal{G}}$, obtained from the graph ${\cal{G}}$ by removing an edge, is larger than
the same component for $c_{\cal{G}}$ if 
\begin{equation} \label{remij}
c^{(2)}_{\mathcal{G}}-c^{(2)}_{\tilde{\cal{G}}}<\frac{1}{n(n-1)}.
\end{equation}
\end{proposition}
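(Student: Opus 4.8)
The plan is to first make the phrase ``global connectivity component due to shortest paths of length $2$'' precise, and then reduce the claimed inequality to a short algebraic manipulation. Starting from \eqref{ckG} with $K=2$, I would partition the sum $\sum_{i\ne j}c^{(2)}_{ij}/\ell^{(2)}_{ij}$ according to the value of $\ell^{(2)}_{ij}$, which can only be $1$, $2$, or $\infty$. The terms with $\ell^{(2)}_{ij}=\infty$ vanish. Each ordered pair $(i,j)$ with $\ell^{(2)}_{ij}=1$ has $c^{(2)}_{ij}=1$, since the graph is simple and hence the edge $e(i\rightarrow j)$ is the unique path of length $1$ from $i$ to $j$; there are exactly $m=|\mathcal{E}|$ such pairs, so their total contribution to $c^{(2)}_{\mathcal{G}}$ is $m/(n(n-1))$. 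The remaining terms, those with $\ell^{(2)}_{ij}=2$, make up what I call the length-$2$ component of the global connectivity,
\[
\gamma^{(2)}_{\mathcal{G}}:=\frac{1}{n(n-1)}\sum_{i\ne j,\ \ell^{(2)}_{ij}=2}\frac{c^{(2)}_{ij}}{2}.
\]
Collecting the three groups gives $c^{(2)}_{\mathcal{G}}=\dfrac{m}{n(n-1)}+\gamma^{(2)}_{\mathcal{G}}$, i.e., $\gamma^{(2)}_{\mathcal{G}}=c^{(2)}_{\mathcal{G}}-c^{(1)}_{\mathcal{G}}$ by Proposition~\ref{ineq_c}.

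Next I would apply exactly the same decomposition to $\tilde{\mathcal{G}}$, which has the same $n$ nodes but one edge fewer. The step that deserves a line of justification is the claim that the length-$1$ contribution drops by precisely $1/(n(n-1))$: deleting the edge $e(i\rightarrow j)$ removes the ordered pair $(i,j)$ from the set of adjacent pairs, no other adjacent pair loses its edge, and deleting an edge cannot shorten any geodesic, so no pair enters the length-$1$ set. Hence $\gamma^{(2)}_{\tilde{\mathcal{G}}}=c^{(2)}_{\tilde{\mathcal{G}}}-\dfrac{m-1}{n(n-1)}$, and subtracting the two identities yields
\[
\gamma^{(2)}_{\tilde{\mathcal{G}}}-\gamma^{(2)}_{\mathcal{G}}
=\bigl(c^{(2)}_{\tilde{\mathcal{G}}}-c^{(2)}_{\mathcal{G}}\bigr)+\frac{1}{n(n-1)}.
\]
Therefore $\gamma^{(2)}_{\tilde{\mathcal{G}}}>\gamma^{(2)}_{\mathcal{G}}$ holds exactly when $c^{(2)}_{\mathcal{G}}-c^{(2)}_{\tilde{\mathcal{G}}}<\dfrac{1}{n(n-1)}$, which is condition \eqref{remij}.

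Since the whole argument amounts to splitting a finite sum into three pieces and performing one subtraction, there is no genuine obstacle. The only subtlety is the bookkeeping in the second paragraph — verifying that removing one edge affects the length-$1$ part of the global connectivity by exactly one term, while everything else is absorbed into the new length-$2$ component $\gamma^{(2)}_{\tilde{\mathcal{G}}}$. Once $\gamma^{(2)}$ is isolated as $c^{(2)}_{\mathcal{G}}-m/(n(n-1))$, the equivalence is immediate, and in fact the argument shows that \eqref{remij} is both necessary and sufficient, not merely sufficient.
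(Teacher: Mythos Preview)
Your argument is correct and follows essentially the same route as the paper: both identify the ``length-$2$ component'' with $c^{(2)}-c^{(1)}$, invoke Proposition~\ref{ineq_c} to rewrite $c^{(1)}_{\mathcal{G}}=m/(n(n-1))$ and $c^{(1)}_{\tilde{\mathcal{G}}}=(m-1)/(n(n-1))$, and subtract. Your version is more explicit in defining $\gamma^{(2)}$ and in justifying why the length-$1$ contribution changes by exactly one term, and you additionally observe that \eqref{remij} is in fact equivalent to the conclusion, not merely sufficient; but the underlying idea is identical.
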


\begin{proof}
We observe that 
$$
c^{(2)}_{\mathcal{G}}-c^{(1)}_{\mathcal{G}}= c^{(2)}_{\mathcal{G}}-\frac{m}{n(n-1)} \quad 
\mathrm{ and }  \quad c^{(2)}_{\tilde{\cal{G}}}-c^{(1)}_{\tilde{\cal{G}}}= 
c^{(2)}_{\tilde{\cal{G}}}-\frac{m-1}{n(n-1)};
$$
see Proposition \ref{ineq_c}. Hence,
$c^{(2)}_{\mathcal{G}}-c^{(1)}_{\mathcal{G}}> 
c^{(2)}_{\tilde{\cal{G}}}-c^{(1)}_{\tilde{\cal{G}}}$
if \eqref{remij} holds. This concludes the proof.
\end{proof}

\begin{remark} \label{rmk_3}
Since checking the global connectivity $c_{\tilde{\cal{G}}}$ of 
$\tilde{\cal{G}}={\cal{G}} \backslash e(i\rightarrow j)$ may not tell us  {the effective loss
of connectivity of the graph,} especially if the component due to shortest paths of length $2$ is 
increased compared to that of $c_{{\cal{G}}}$, one may alternatively compare the global 
efficiency before and after edge removal. Indeed, $e_{\tilde{\cal{G}}}$ cannot be larger 
than $e_{{\cal{G}}}$, because $\tilde{\ell}_{ij}\geq {\ell}_{ij}$; see \eqref{e_G}.
\end{remark}

\begin{example}\label{ex2}
Consider the graph $\cal{G}$ depicted in Figure \ref{fig1}. Table \ref{tab1} shows the 
global connectivity component for graphs $\tilde{\cal{G}}={\cal{G}}_{ij}$ due to the 
shortest paths of length $2$, i.e., $c^{(2)}_{\tilde{\cal{G}}}-c^{(1)}_{\tilde{\cal{G}}}$,
to be smaller than or equal to the corresponding global connectivity component 
$c^{(2)}_{\mathcal{G}}-c^{(1)}_{\mathcal{G}}$ for $\cal{G}$ with one exception. In fact, 
since $c_{\mathcal{G}}=c^{(2)}_{\mathcal{G}}=0.8750$ and $1/(n(n-1))=0.0500$, we can 
verify that equation \eqref{remij} holds for $\tilde{\cal{G}}={\cal{G}}_{32}$, with 
$c^{(2)}_{\tilde{\cal{G}}}=0.8500$. However, Table \ref{tab1} shows that the decrease in 
global efficiency of all graphs $\tilde{\cal{G}}$ is consistent with the decrease in their
global connectivity. Also, the effect on the global efficiency caused by removing the edge
$e(2\rightarrow 1)$ is seen to be consistent with its edge betweenness centrality.
\end{example}

\subsection{Girvan-Newman-type algorithms for the removal of edges until the graph is 
disconnected}\label{4.2}
Without taking into account the metric associated with global connectivity, we can make a 
comparison between the loss of connectivity of a graph by sequentially removing one edge
at a time, with recalculation. The selected edge to be removed has the highest betweenness 
centrality, as in the Girvan-Newman algorithm \cite{GN}, or the selection of the edge also
takes the connectivity centrality into account. Given the adjacency matrix $A$ for a 
connected graph $\cal G$, two algorithms for removing edges until the resulting graph is 
disconnected may be summarized as follows. Differently from the Girvin-Newman algorithm,
Algorithm 1 terminates when the graph obtained by successive edge removals is disconnected.

\vspace{0.2 in}

\noindent {\bf Algorithm 1}:

\noindent   
While $\cal G$ is connected, repeat Steps 1-2:
\begin{enumerate}
\item Evaluate the matrix $L$ and compute the betweenness centrality for all existing 
edges.
\item Remove an edge with the largest betweenness centrality.
\end{enumerate}

\vspace{0.2 in}

\noindent {\bf Algorithm 2}:

\noindent
While $\cal G$ is connected, repeat Steps 1-2: 
\begin{enumerate}
\item Evaluate the matrices $L$ and $C$, and compute the betweenness and connectivity 
centralities for all existing edges.
\item Remove the edge with the highest betweenness centrality and, if there is more than 
one such edge, then remove the one with the smallest connectivity centrality.
\end{enumerate}

\vspace{0.2 in}

\begin{remark}\label{rmk_4}
For large graphs, one has to evaluate the matrices $L$ and $C$ to check whether an 
edge removal has made the graph disconnected. It is not sufficient to determine the 
matrices $L^{K}$ and $C^{K}$. However, measures of $K$-connectivity and $K$-betweenness, 
with $K$ equal to the diameter of the initial graph, generally approximate the relevant 
measures of connectivity and betweenness quite well.
\end{remark}

\begin{remark}\label{rmk_5}
When the graph ${\cal G}$ is undirected, the above algorithms have to be modified: the two
edges with the largest betweenness centrality are removed in each iteration (and, in 
Algorithm 2, if there are more than one such edge pair, then the edge pair with the 
smallest connectivity centrality should be removed). This procedure will select edges
pairs $\{e(i\rightarrow j),e(j\rightarrow i)\}$ that should be removed for suitable 
$i\ne j$. Moreover, as for the measures of connectivity, one could choose to work with the
strictly triangular parts of the matrices $L$ and $C$.
\end{remark}

\section{Numerical illustrations} \label{s5}
The numerical examples reported in this paper have been carried out using MATLAB R2024b on
a 3.2 GHz Intel Core i7 6 core iMac, i.e., with about $15$ significant decimal digits.

The simulation of malicious attacks aimed to disconnect a network may be based on an 
edge-betweenness strategy; see, e.g., \cite{GN,LPWS}. The removal of edges with the
largest betweenness centrality, indeed, is likely to disconnect a graph, because these
edges might represent a ``bridge'' between two parts of a graph. 

\begin{example}\label{ex3}
Consider an undirected connected graph ${\mathcal{G}}$ that represents the German highway 
system network {\em Autobahn}. The graph is available at \cite{air500}. Its $1168$ nodes 
are German locations and its $1243$ edges represent highway segments that connect them. 
The diameter $d_{\cal{G}}$ is $62$, the global connectivity 
$c_{\cal{G}}=c_{\cal{G}}^{(62)}$ is $8.7405\cdot 10^{-2}$, and the global 
efficiency $e_{\cal{G}}=e_{\cal{G}}^{(62)}$ is $6.7175\cdot 10^{-2}$. The first edge pair
removal suggested by Algorithm 1 (see Remark \ref{rmk_5}) is
$\{e(565\rightarrow 219),e(219\rightarrow 565)\}$. A total $20$ edge pair removals 
relative to the initial graph (in the first row) and relative to all graphs that have 
undergone {\it all} preceding edge pair removals are shown in Table \ref{tab3} in the 
order suggested by Algorithm 1 until obtaining a disconnected graph. In more detail,
Table \ref{tab3} displays for each graph obtained from ${\cal G}$ by one or several edge 
removals the diameter, the global connectivity, and the global efficiency. Moreover, the 
table reports the global connectivity and the global efficiency considering shortest paths
of  {length not larger than the diameter $d_{\cal{G}}$ of the given graph $\cal{G}$.} In
addition, we report for each graph the edge pairs with the largest betweenness centrality. 
No situation when there are more than one edge pair with the same betweenness centrality 
occured. Therefore, Algorithm 2 does not come into play for the {\em Autobahn} network. 

\begin{table}[h!tb]
\centering
\begin{tabular}{c | c  c  c  c | c c  c }
$d_{\cal{G}}$& $c_{\cal{G}}$&$e_{\cal{G}}$ & $c^{(62)}_{\cal{G}}$&$e^{(62)}_{\cal{G}}$ &$e(i\leftrightarrow j)$ & $\hat{b}_{ij}$ & $\hat{b}^{(62)}_{ij}$\\
\hline
$62$& $8.7405\cdot 10^{-2}$&$6.7175\cdot 10^{-2}$ & $8.7405\cdot 10^{-2}$&$6.7175\cdot 10^{-2}$ &$e(565\leftrightarrow  219)$ & $9.0512\cdot 10^{4}$ & $9.0512\cdot 10^{4}$ \\

$62$& $8.3116\cdot 10^{-2}$&$6.6176\cdot 10^{-2}$ & $8.3116\cdot 10^{-2}$&$6.6176\cdot 10^{-2}$ &$e(450\leftrightarrow  331)$ & $1.0434\cdot 10^{5}$ & $1.0434\cdot 10^{5}$ \\

$62$& $7.9966\cdot 10^{-2}$&$6.5508\cdot 10^{-2}$ & $7.9966\cdot 10^{-2}$&$6.5508\cdot 10^{-2}$ &$e(393\leftrightarrow  373)$ & $8.8435\cdot 10^{4}$ & $8.8435\cdot 10^{4}$ \\

$62$& $7.7388\cdot 10^{-2}$&$6.3784\cdot 10^{-2}$ & $7.7388\cdot 10^{-2}$&$6.3784\cdot 10^{-2}$ &$e(751\leftrightarrow  735)$ & $8.5493\cdot 10^{4}$ & $8.5493\cdot 10^{4}$ \\

$65$& ${\it 7.7465\cdot 10^{-2}}$&$6.2914\cdot 10^{-2}$ & ${\it 7.7464\cdot 10^{-2}}$&$6.2914\cdot 10^{-2}$ &$e(637\leftrightarrow  619)$ & $7.1084\cdot 10^{4}$ & $7.1084\cdot 10^{4}$ \\

$65$& $7.6442\cdot 10^{-2}$&$6.1956\cdot 10^{-2}$ & $7.6440\cdot 10^{-2}$&$6.1956\cdot 10^{-2}$ &$e(347\leftrightarrow  331)$ & $9.6298\cdot 10^{4}$ & $9.6287\cdot 10^{4}$ \\

$66$& $7.3119\cdot 10^{-2}$&$6.1161\cdot 10^{-2}$ & $7.3119\cdot 10^{-2}$&$6.1160\cdot 10^{-2}$ &$e(729\leftrightarrow  698)$ & $6.3999\cdot 10^{4}$ & $6.3999\cdot 10^{4}$ \\

$66$& ${\it 7.3647\cdot 10^{-2}}$&$6.0600\cdot 10^{-2}$ & ${\it 7.3646\cdot 10^{-2}}$&$6.0599\cdot 10^{-2}$ &$e(538\leftrightarrow 534)$ & $8.4349\cdot 10^{4}$ & $8.4348\cdot 10^{4}$ \\

$66$& $7.0991\cdot 10^{-2}$&$5.9749\cdot 10^{-2}$ & $7.0990\cdot 10^{-2}$&$5.9749\cdot 10^{-2}$ &$e(763\leftrightarrow 698)$ & $1.1038\cdot 10^{5}$ & $1.1038\cdot 10^{5}$ \\

$66$& $6.7411\cdot 10^{-2}$&$5.8279\cdot 10^{-2}$ & $6.7411\cdot 10^{-2}$&$5.8279\cdot 10^{-2}$ &$e(758\leftrightarrow 747)$ & $7.8654\cdot 10^{4}$ & $7.8654\cdot 10^{4}$ \\

$66$& $6.5520\cdot 10^{-2}$&$5.7368\cdot 10^{-2}$ & $6.5520\cdot 10^{-2}$&$5.7367\cdot 10^{-2}$ &$e(763\leftrightarrow 719)$ & $8.0332\cdot 10^{4}$ & $8.0332\cdot 10^{4}$ \\

$66$& $6.4440\cdot 10^{-2}$&$5.6355\cdot 10^{-2}$ & $6.4439\cdot 10^{-2}$&$5.6355\cdot 10^{-2}$ &$e(436\leftrightarrow 410)$ & $8.3411\cdot 10^{4}$ & $8.3402\cdot 10^{4}$ \\

$66$& $6.3704\cdot 10^{-2}$&$5.5847\cdot 10^{-2}$ & $6.3703\cdot 10^{-2}$&$5.5846\cdot 10^{-2}$ &$e(412\leftrightarrow 347)$ & $6.2831\cdot 10^{4}$ & $6.2818\cdot 10^{4}$ \\

$69$& $6.2767\cdot 10^{-2}$&$5.4966\cdot 10^{-2}$ & $6.2765\cdot 10^{-2}$&$5.4965\cdot 10^{-2}$ &$e(892\leftrightarrow 886)$ & $7.1236\cdot 10^{4}$ & $7.1235\cdot 10^{4}$ \\

$74$& $6.1981\cdot 10^{-2}$&$5.3763\cdot 10^{-2}$ & $6.1974\cdot 10^{-2}$&$5.3759\cdot 10^{-2}$ &$e(549\leftrightarrow 543)$ & $7.1317\cdot 10^{4}$ & $7.1317\cdot 10^{4}$ \\

$74$& $6.1788\cdot 10^{-2}$&$5.3207\cdot 10^{-2}$ & $6.1781\cdot 10^{-2}$&$5.3203\cdot 10^{-2}$ &$e(564\leftrightarrow 556)$ & $9.0756\cdot 10^{4}$ & $9.0678\cdot 10^{4}$ \\

$74$& $6.0755\cdot 10^{-2}$&$5.2516\cdot 10^{-2}$ & $6.0750\cdot 10^{-2}$&$5.2511\cdot 10^{-2}$ &$e(255\leftrightarrow 252)$ & $7.2351\cdot 10^{4}$ & $7.2193\cdot 10^{4}$ \\

$78$& $5.8708\cdot 10^{-2}$&$5.1530\cdot 10^{-2}$ & $5.8681\cdot 10^{-2}$&$5.1504\cdot 10^{-2}$ &$e(599\leftrightarrow 588)$ & $8.5994\cdot 10^{4}$ & $8.5812\cdot 10^{4}$ \\

$78$& $5.7390\cdot 10^{-2}$&$5.0591\cdot 10^{-2}$ & $5.7328\cdot 10^{-2}$&$5.0534\cdot 10^{-2}$ &$e(452\leftrightarrow 431)$ & $1.1848\cdot 10^{5}$ & $1.1591\cdot 10^{5}$ \\

$86$& $5.6558\cdot 10^{-2}$&$4.9681 \cdot 10^{-2}$ & $5.6289\cdot 10^{-2}$&$4.9423\cdot 10^{-2}$ &$e(902\leftrightarrow 882)$ & $1.3856\cdot 10^{5}$ & $1.2684\cdot 10^{5}$ \\

$\infty$&  {$5.1455\cdot 10^{-2}$}& {$4.4967\cdot 10^{-2}$} &  {$5.1454\cdot 10^{-2}$}& {$4.4966\cdot 10^{-2}$} & & &  \\
\end{tabular}
\caption{Example \ref{ex3}. Connectivity measures for all graphs determined by Algorithm 
1. The notation $e(i\leftrightarrow j)$ stands for both edges $e(i\rightarrow j)$ and 
$e(j\rightarrow i)$.} 
\label{tab3}
\end{table}

In Table \ref{tab3} we can observe that for each consecutive graph, the global efficiency 
decreases and, apart from two exceptions marked in italics, the same happens for the global 
connectivity. The fact that for each graph, the values of the measures $c_{\cal{G}}$ and 
$e_{\cal{G}}$ are very close suggests that there are not many shortest paths 
joining the pairs of nodes that define the highway segment to be removed. Furthermore, for
each graph we can observe that, for $K=62$, the values of both the global connectivity and 
the global efficiency are extremely close to the value of the global $K$-connectivity 
\eqref{ckG} and to the global $K$-efficiency \eqref{ekg}, respectively; see Remark 
\ref{rmk_4}. Finally, for each graph considered, the edges with the highest betweenness 
centrality connect the same nodes in both directions. These edges also have the highest 
$K$-betweenness centrality, with $K=62$, as is reported in Table \ref{tab3}; see Remark 
\ref{rmk_4}. Figure \ref{fig2} shows the original {\em Autobahn} network and the 
disconnected network obtained  {by removing edges with Algorithm 1 from the original
network until a disconnected network is obtained}.

\begin{figure}
\centering
\begin{tabular}{cc}
\includegraphics[scale = 0.4]{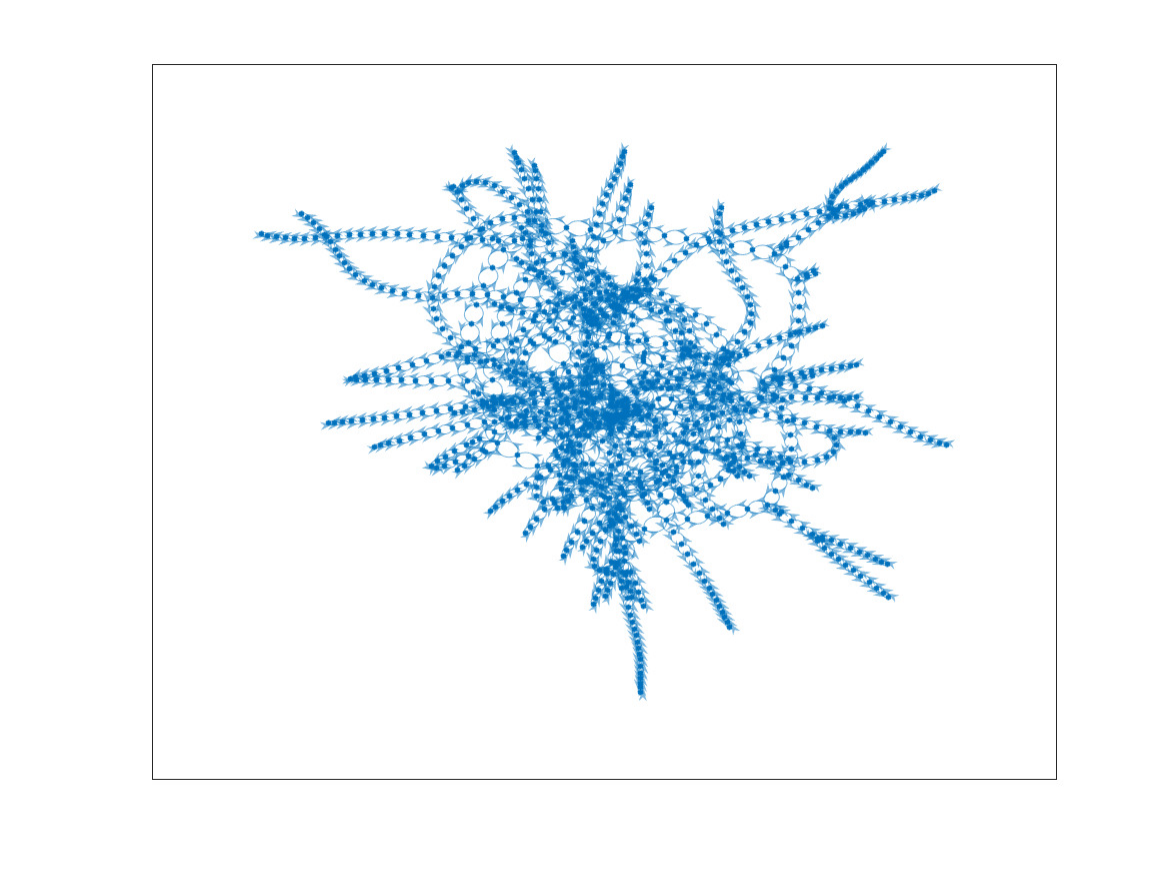} 
\includegraphics[scale = 0.4]{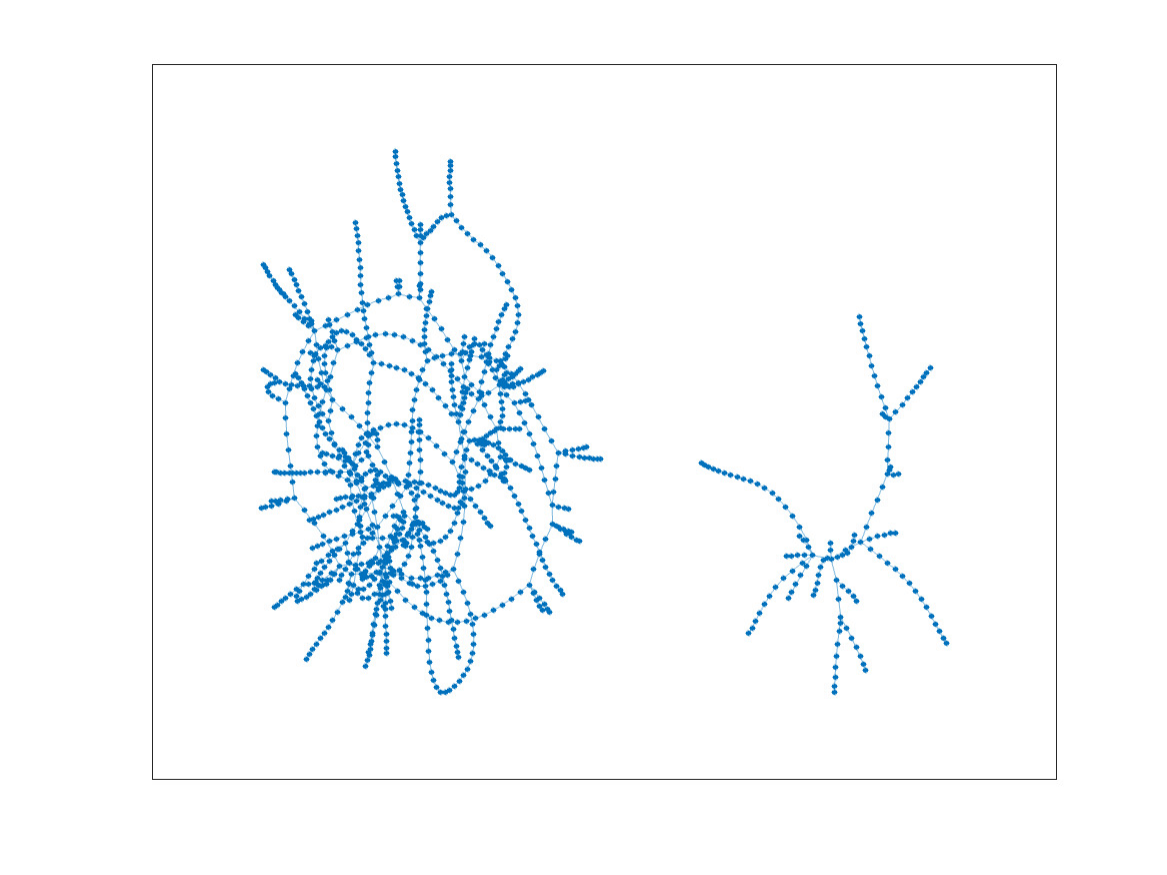} 
\end{tabular}
\caption{Graphs for Example \ref{ex3}: Original graph (left) and disconnected graph  determined
by edge removal with Algorithm 1 (right).}
\label{fig2}
\end{figure}    
\end{example}

\begin{remark}\label{rmk_6}
In the {\em Autobahn} network of Example \ref{ex3}, there is only one shortest path of 
length $62$. It connects node $116$ (Bissendorf) and node $1154$ (W\"ustenbrand). The graph 
can be disconnected by removal of just one undirected edge between node $116$ and node 
$120$ (B\"unde), or by removing one undirected edge between node $1154$ and node $1145$ 
(Wommen). This would decrease the global connectivity from 
$8.7405\cdot 10^{-2}$ to $8.7342\cdot 10^{-2}$ in the first case and to 
$8.7336\cdot 10^{-2}$ in the second case, and it would reduce the global efficiency from 
$6.7175\cdot 10^{-2}$ to $6.7118\cdot 10^{-2}$ in the first case and to 
$6.7126\cdot 10^{-2}$ in the second case. 
However, none of the above removals is suggested by the Girvan-Newman algorithm \cite{GN}, 
which tries to remove the edges that are, in some sense, the most important ones in the 
network. We are not interested in disconnecting the graph after as few edge removals
as possible, but rather in evaluating whether the impact of the removals suggested by the 
Girvan-Newman algorithm is consistent with the progressive decrease of the graph connectivity 
measures.
\end{remark}
 
\begin{example}\label{ex4}
Consider the network \emph{Air500} in \cite{air500}. This dataset describes flight
connections for the top 500 airports worldwide based on total passenger volume. The 
network is based on flight connections between airports within one year from July 1, 2007, 
to June 30, 2008. The network is represented by a connected graph $\cal{G}$ with $500$ 
nodes and $24009$ edges. The nodes represent airports and the edges represent direct 
flight routes between two airports. The graph is directed because some flight routes are
available only in one direction.

The diameter of the graph $\cal{G}$ is $5$, its global connectivity 
$c_{\cal{G}}=c_{\cal{G}}^{(5)}$ is $1.0330\cdot 10^{1}$, and its global efficiency 
$e_{\cal{G}}=e_{\cal{G}}^{(5)}$ is $4.8392\cdot 10^{-1}$. The first two edges to be 
removed, as suggested by Algorithm 1, are, in order, 
\begin{enumerate}
\item remove edge $e(80\rightarrow 177)$, with $\hat{b}_{80,177}=5.2048\cdot 10^{2}$;
\item remove edge $e(177\rightarrow 80)$, with $\hat{b}_{177,80}=5.0446\cdot 10^{2}$.
\end{enumerate}
The first edge-removal decreases the global efficiency to $4.8387\cdot 10^{-1}$ and 
increases the diameter to $6$ and the global connectivity to $1.0341\cdot 10^{1}$. The 
second edge-removal decreases the global efficiency to $4.8383\cdot 10^{-1}$, leaves the 
diameter at $6$, and increases the global connectivity to $1.0347\cdot 10^{1}$. The fact 
that the edge removal increases the global connectivity suggests that for this example 
it makes no sense to consider this indicator; see Subsection \ref{4.1}. We therefore only
will monitor how the global efficiency of the network decreases with the removal of edges.
 {The fact that the connectivity increases when removing an edge is an illustration of 
the Braess paradox; see \cite{ABCMP,Br}.} 

When determining the third edge to be removed, we have to use Algorithm 2 because there 
are four choices suggested by Algorithm 1 with the same betweenness centrality. We order 
these choices in ascending order of connectivity centrality:
\begin{enumerate}
\item[$3_a.$] remove edge $e(10\rightarrow 313)$, with  $\hat{h}_{10,313}=7.5755\cdot 10^{6}$;
\item[$3_b.$] remove edge $e(313 \rightarrow 10)$, with $\hat{h}_{313,10}= 7.6187\cdot 10^{6}$;
\item[$3_c.$] remove edge $e(80 \rightarrow 399)$, with $\hat{h}_{80,399}= 1.5454\cdot 10^{7}$; 
\item[$3_d.$] remove edge $e(399 \rightarrow 80)$, with $\hat{h}_{399.80}= 1.5519\cdot 10^{7}$.
\end{enumerate}

Any of these edge removals disconnect the graph that already has undergone the removal of 
two edges. Table \ref{tab4} reports centrality measures for each one of the four edges and
the effect of their removal on the global efficiency.

\begin{table}[h!tb]
\centering
\begin{tabular}{c | c  c  c  c | c  c c  c } 
$e(i\rightarrow j)$& $\hat{b}_{ij}$ & $\hat{b}^{(5)}_{ij}$& $\hat{h}_{ij}$ & $\hat{h}^{(5)}_{ij}$ &$e_{\cal{G}}$& $e^{(5)}_{\cal{G}}$ \\
\hline
$e(10\rightarrow 313)$ & $4.9900\cdot 10^{2}$ & $4.9900\cdot 10^{2}$& $7.5755\cdot 10^{6}$ & $7.5755\cdot 10^{6}$ & $4.8321\cdot 10^{-1}$&$4.8321\cdot 10^{-1}$ \\
$e(313 \rightarrow 10)$& $4.9900\cdot 10^{2}$& $4.9900\cdot 10^{2}$&$7.6187\cdot 10^{6}$ &  $7.6187\cdot 10^{6}$ & $4.8322\cdot 10^{-1}$&$4.8321\cdot 10^{-1}$ \\
$e(80 \rightarrow 399)$& $4.9900\cdot 10^{2}$& $4.9600\cdot 10^{2}$&$1.5454\cdot 10^{7}$ &  $1.4772\cdot 10^{7}$ & $4.8332\cdot 10^{-1}$&$4.8331\cdot 10^{-1}$ \\
$e(399 \rightarrow 80)$& $4.9900\cdot 10^{2}$& $4.9500\cdot 10^{2}$&$1.5519\cdot 10^{7}$ &  $1.4593\cdot 10^{7}$ & $4.8332\cdot 10^{-1}$&$4.8332\cdot 10^{-1}$ \\
\end{tabular}
\caption{Example \ref{ex3}. Edge-centrality measures and global efficiency for the graph 
after edge removal.} 
\label{tab4}
\end{table}

Table \ref{tab4} shows that the choice $3_a$ results in the maximal loss of global 
efficiency. Note that all the values of global $K$-efficiency, with $K=d_{\cal{G}}=5$, are
extremely close to the corresponding values of global efficiency; see Remark \ref{rmk_4}. 
Furthermore, it can be observed that if one would use the $K$-betweenness values to decide 
on the third edge-removal, the choice would be $e(10\rightarrow 313)$, with the observable 
effect of the largest loss of global $K$-efficiency (equal to that due to the removal of 
$e(313 \rightarrow 10)$).

Finally, we note that starting with the original graph $\cal{G}$, the removal of any one 
of the edges identified in $3_a$, $3_b$, $3_c$, or $3_d$, would have immediately 
disconnected the graph. In detail, the choice $3_a$ would have reduced the global 
efficiency $e_{\cal{G}}$ from $4.8383\cdot 10^{-1}$ to $4.8331\cdot 10^{-1}$, the choice 
$3_b$ to $4.8332\cdot 10^{-1}$, and the choices $3_c$ and $3_d$ to $4.8340\cdot 10^{-1}$. 
In fact, in the Air500 network, the Buenos Aires AEP airport (node $10$) only handles 
routes to/from the Montevideo MVD airport (node $313$) and the Rio de Janeiro SDU airport 
(node $399$) only handles routes to/from the Sao Paulo CGH airport (node $177$). Thus,
if we remove edge $e(10\rightarrow 313)$, then no destination can reached from the AEP 
airport, while if we remove edge $e(313\rightarrow 10)$, then it is impossible to reach 
the AEP airport. Similarly, if we remove edge $e(80\rightarrow 399)$, then it is not 
possible to reach the SDU airport, while if we remove edge $e(399\rightarrow 80)$, then no
destination can be reached from the SDU airport. However, as observed in Remark 
\ref{rmk_6}, our main interest is in the impact of edge removals with the largest 
betweenness centrality.
\end{example}

\begin{example} \label{ex5}
 {Consider the Zachary Karate Club social network, which models friendships among 
members of a US university karate club in the 1970s. In Zachary's original paper 
\cite{Za}, the  network is undirected, connected, and unweighted  with $n=34$ nodes, which
correspond to club members, and $m=78$ edges, each of which corresponds to interactions
of two club members outside the club.  The dataset Zachary78 is available at \cite{SSMC}. 
However, the interaction between nodes $34$ and $23$ is reported ambiguously in \cite{Za}
and the $77$-edge dataset Zachary77 does not contain the edge $e(34 \leftrightarrow 23)$; 
see \cite{ICON}. The diameter of both graphs is $5$.}

 {We apply Algorithm 2 to both graphs Zachary78 and Zachary77. When using
this algorithm to remove $11$ edge pairs from each dataset, the same edge pairs are
eliminated in the same order in both graphs. This yields disconnected graphs. The
graphs determined by $10$ edge pair removals in each dataset are connected. In the only 
step (step 10) when Algorithm 2 finds two edges ($e(14 \leftrightarrow 3)$ and 
$e(8 \leftrightarrow 3)$) with the same highest betweenness centrality, 
these edges also have the same connectivity centrality. In fact, removing either one
of the edges $e(14 \leftrightarrow 3)$ or $e(8 \leftrightarrow 3)$ gives graphs with the 
same global efficiency ($0.4085$ for Zachary78 and $0.4064$ for Zachary77). The global
connectivity for the graphs obtained is $0.8107$ for Zachary78 and $0.7705$ for Zachary77. 
Furthermore, in both cases the next step removes the other one of the these two edges and
gives a disconnected graph. The disconnected components of the graph are made up of 
club members that are friendly to each other.}

 {Interestingly, Algorithm 2 removes the same first $11$ edge pairs if the removal is 
based on the $K$-betweenness centralities and $K$-connectivity centralities with $K=5$; see
Remark \ref{rmk_4}. The use of the $K$-betweenness centralities for a fairly small value
of $K$ reduces the computational burden.}

 {Figure \ref{fig3} shows the original Zachary78 and Zachary77 networks as well as the 
disconnected networks obtained by Algorithm 2 when applied to the original networks until 
they are disconnected. Figure \ref{fig4} displays the global efficiency and global 
connectivity of the two karate club networks as a function of the progressive steps of 
Algorithm 2. The horizontal axis ranges from 0 (original graph) to 11 (after eleven 
steps). The global connectivity is seen not to decrease monotonically.}

\begin{figure}
\centering
\begin{tabular}{cc}
\includegraphics[scale = 0.4]{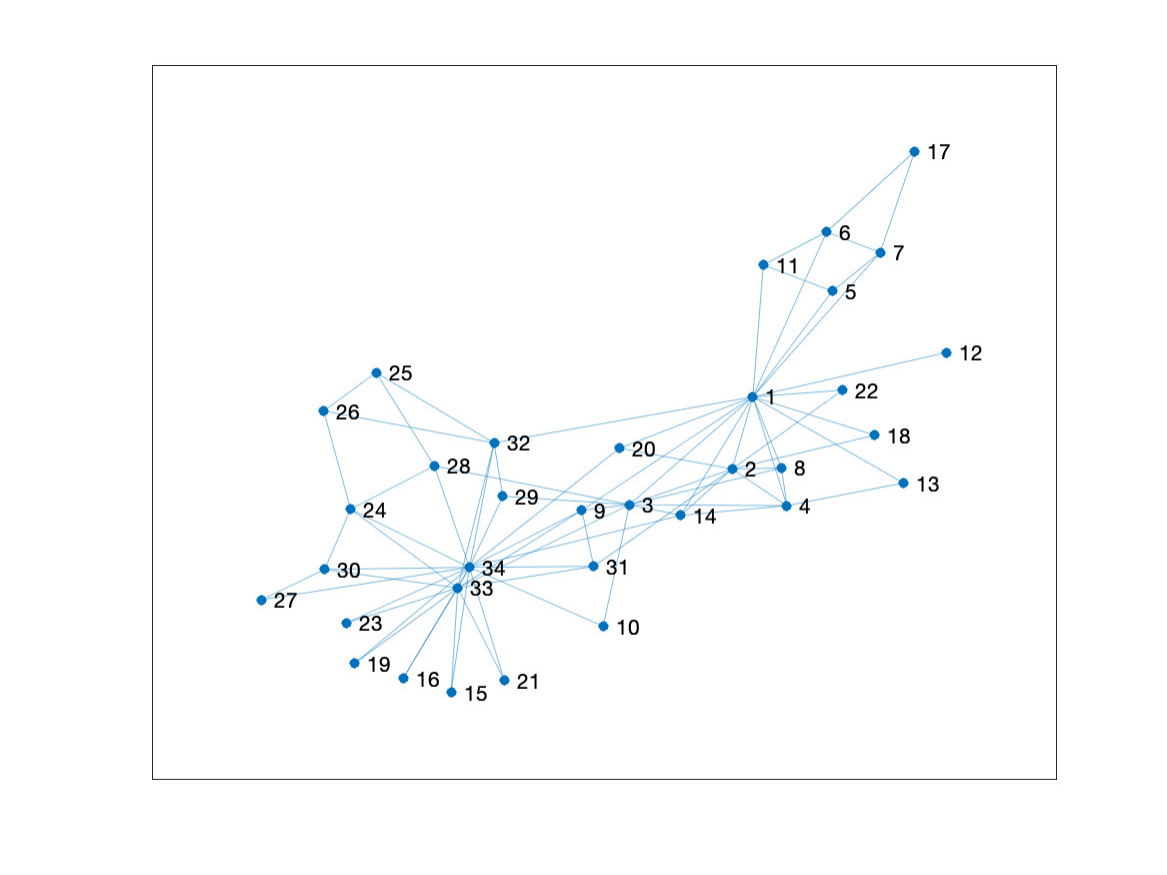} 
\includegraphics[scale = 0.4]{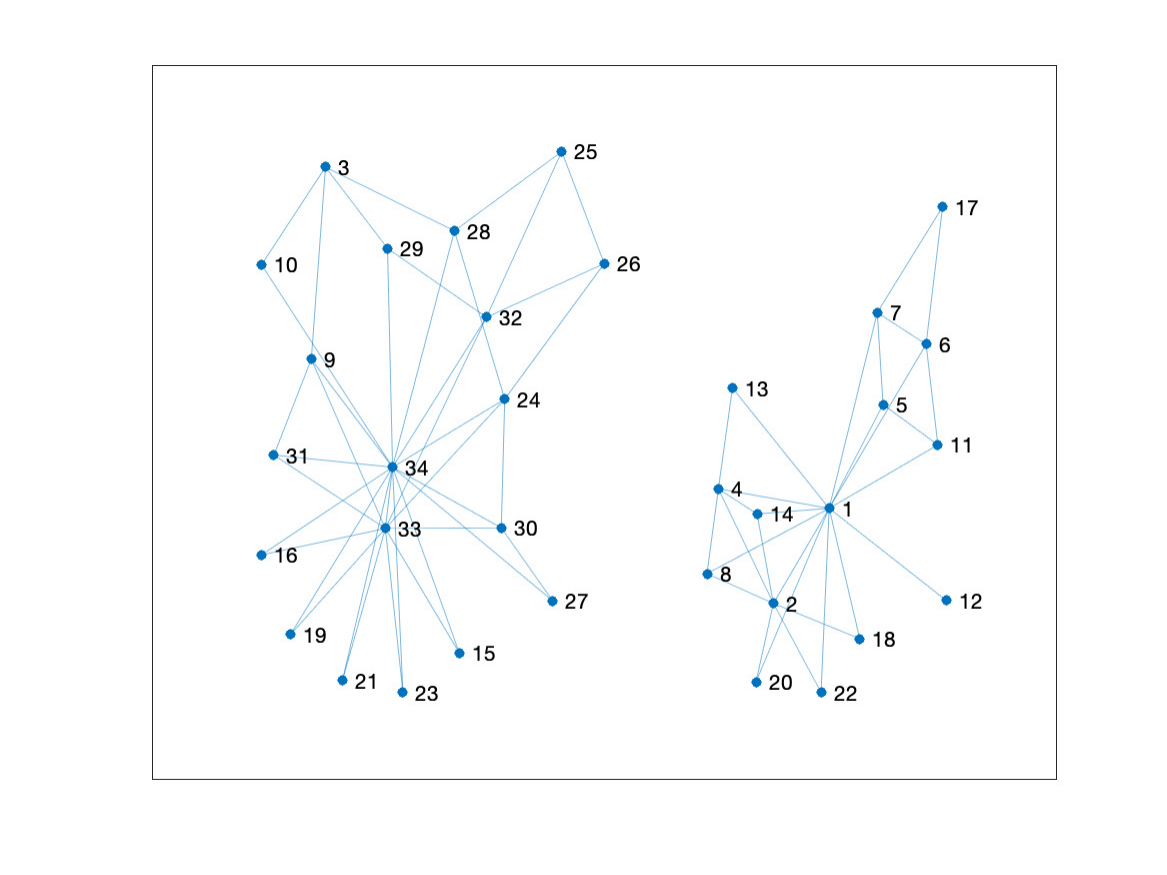} \\
\includegraphics[scale = 0.4]{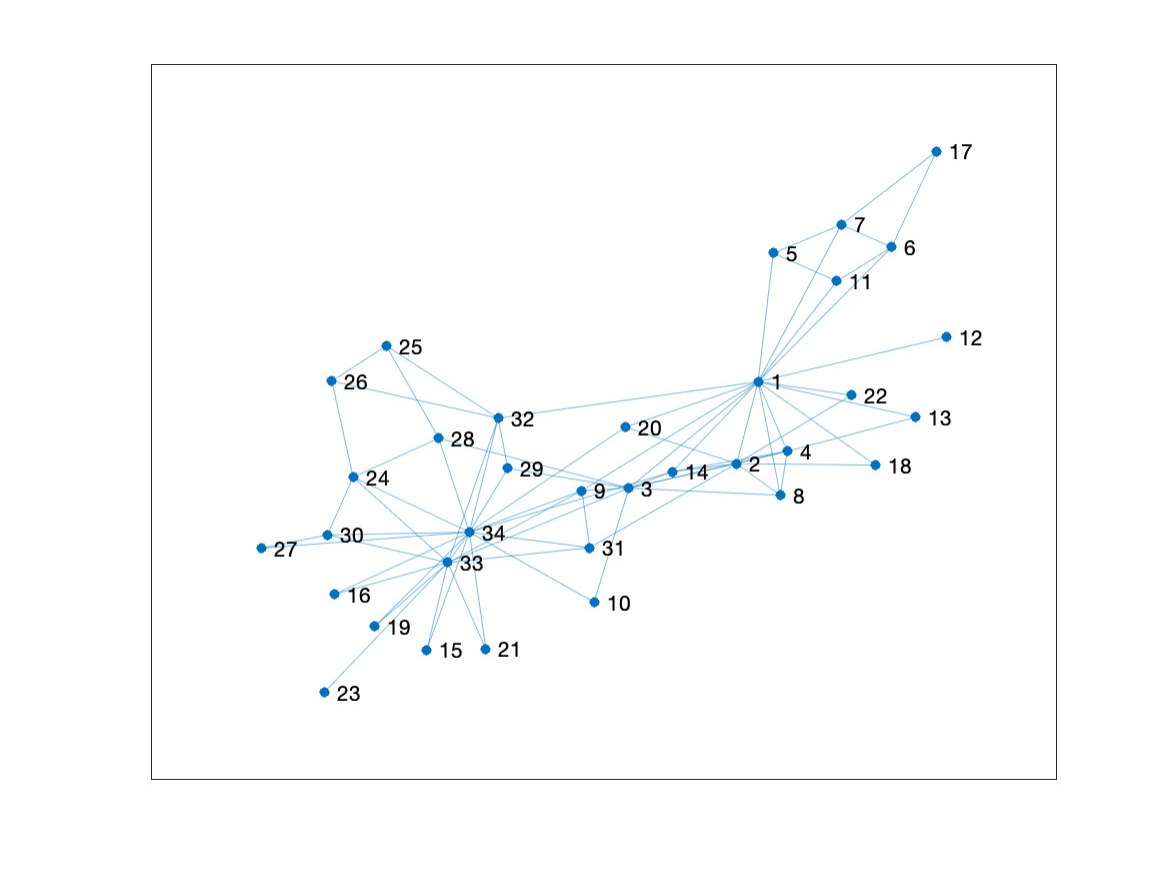} 
\includegraphics[scale = 0.4]{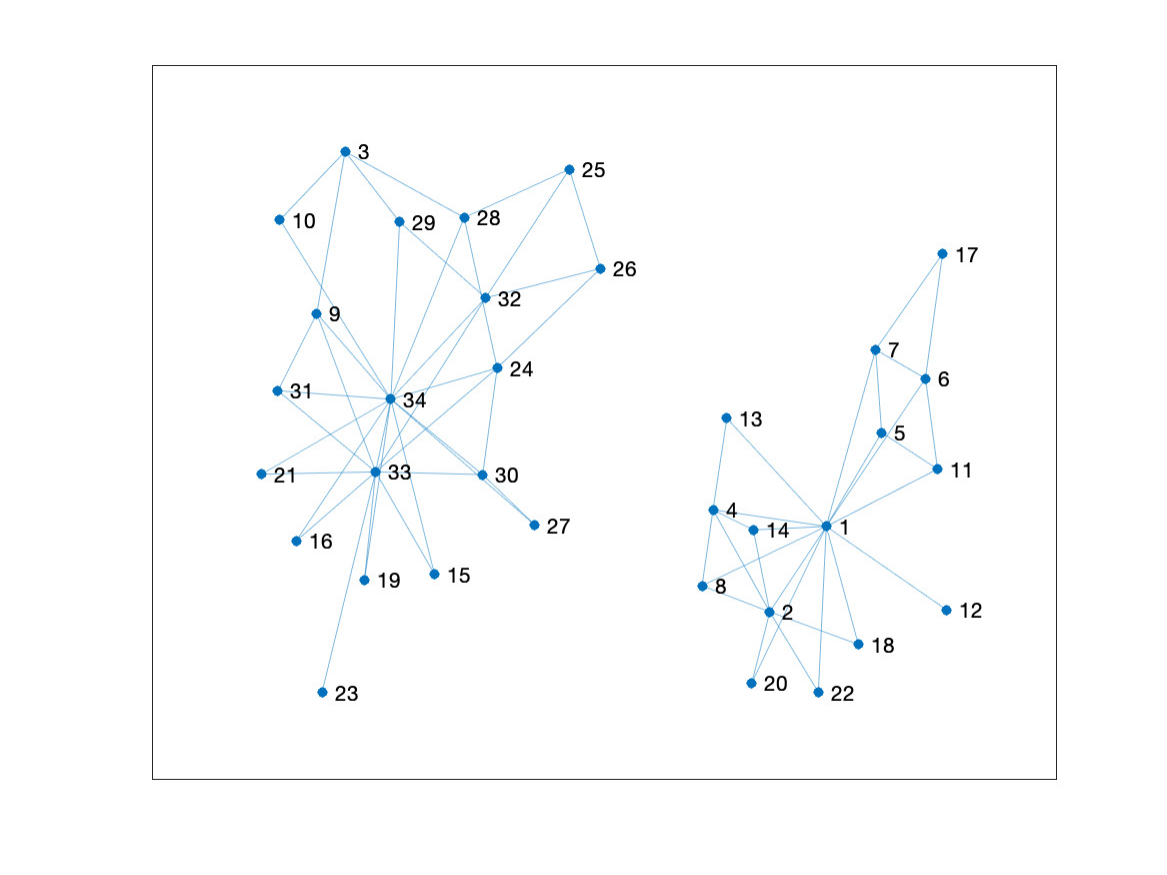} 
\end{tabular}
\caption{Example \ref{ex5}. The Zachary78 network (top left) and the 
disconnected graph determined by Algorithm 2 (top right), and the Zachary77 network 
(bottom left) and the disconnected graph determined by Algorithm 2 (bottom right).}
\label{fig3}
\end{figure}    

\begin{figure}
\centering
\begin{tabular}{cc}
\includegraphics[scale = 0.6]{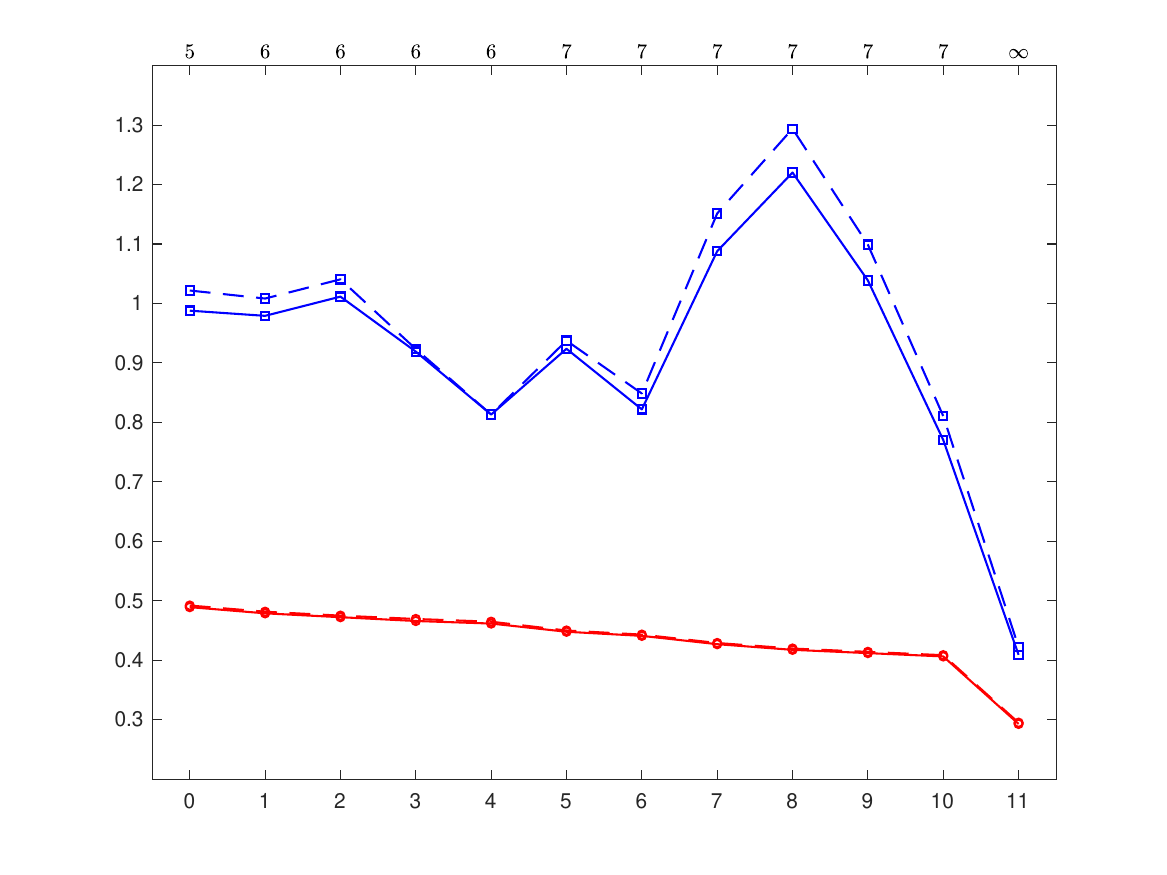} 
\end{tabular}
\caption{Example 5. Global efficiency (in red) and  global connectivity (in blue) are 
shown as functions of the number of steps of Algorithm 2 for both the Zachary78 network
(dashed lines) and the Zachary77 network (solid lines). Step $0$ refers to the original
networks. The numbers on the top horizontal axis display the graph diameters at each step. }
\label{fig4}
\end{figure}    

\end{example}

 {\section{Concluding remarks} \label{s6}
This paper is concerned with the flow of information via the shortest paths of a network.
These paths typically are the most important ones for transmitting information. To analyze
the effectiveness of the information flow via these paths, we introduced an edge 
centrality measure, as well as its analogue that only considers shortest paths and the 
number of these paths. These measures indicate whether an edge can be removed without
serious impact on the information flow. They refine the notion of edge betweenness 
centrality and its novel analogue that considers shortest paths only. Computed examples
illustrate that the novel measures introduced provide guidance in how to simplify an 
available network by removing certain edges.}

\section*{Acknowledgments}
 {The authors would like to thank a referee for comments that led to clarifications
of the presentation.}
Research by SN was partially supported by a grant from SAPIENZA Universit\`a di Roma, 
by INdAM-GNCS, and by the European Union - NextGenerationEU CUP F53D23002700006: PRIN 2022 
research project “Inverse Problems in the Imaging Sciences (IPIS)", grant n. 2022ANC8HL. 

\section*{Data availability statement} 
Data sharing not is applicable to this article as no new datasets were generated during the 
current study.

\section*{Conflict of interest} 
The authors declare that they have no conflict of interest. 

\section*{Funding}
See acknowledgments.

\section*{Authors' contributions}
All authors contributed equally to the paper.

\section*{Ethical Approval}
Not Applicable.


\begin{thebibliography}{99} 

\bibitem{AMDLCR}
M. Al Mugahwi, O. De la Cruz Cabrera, and L. Reichel, {\em Orthogonal expansion of network
functions}, Vietnam J. Math., 48 (2020), pp. 941--962.

\bibitem{ABCMP}
D. Altafini, D. A. Bini, V. Cutini, B. Meini, and F. Poloni, {\em An edge centrality 
measure based on the Kemeny constant}, SIAM J. Matrix Anal. Appl., 44 (2023), pp. 
648--669.

\bibitem{AB}
F. Arrigo and M. Benzi, {\em Edge modification criteria for enhancing the communicability 
of digraphs}, SIAM J. Matrix Anal. Appl., 37 (2016), pp. 443--468.

\bibitem{BFRR}
J. Baglama, C. Fenu, L. Reichel, and G. Rodriguez, {\em Analysis of directed networks via
partial singular value decomposition and Gauss quadrature}, Linear Algebra Appl., 456 
(2014), pp. 93--121 .

\bibitem{BBV}
A. Barrat, M. Barthelemy, and A. Vespignani, {\em Dynamical Processes on Complex Networks},
Cambridge University Press, Oxford, 2008.

\bibitem{Bor}
S. P. Borgatti, {\em Centrality and network flow}, Social Networks, 27 (2005), pp. 55--71.

\bibitem{Br}
D. Braess, {\em  \"Uber ein Paradoxon aus der Verkehrsplanung}, Unternehmensforschung, 12
(1968), pp. 258--268.

\bibitem{BC}
J. I. Brown and C. J. Colbourn, {\em Roots of the reliability polynomial}, SIAM J. 
Discrete Math., 5 (1992), pp. 571--585.

\bibitem{DJNR}
O. De la Cruz Cabrera, J. Jin, S. Noschese, and L. Reichel, {\em Communication in complex
networks}, Appl. Numer. Math., 172 (2022), pp. 186--205.

\bibitem{DMR}
O. De la Cruz Cabrera, M. Matar, and L. Reichel, {\em Analysis of directed networks via 
the matrix exponential}, J. Comput. Appl. Math., 355 (2019), pp. 182--192.

\bibitem{DMR2}
O. De la Cruz Cabrera, M. Matar, and L. Reichel, {\em Edge importance in a network via 
line graphs and the matrix exponential}, Numer. Algorithms, 83 (2020), pp. 807--832.

\bibitem{DMR3}
O. De la Cruz Cabrera, M. Matar, and L. Reichel, {\em Centrality measures for 
node-weighted networks via line graphs and the matrix exponential}, Numer. Algorithms, 88
(2021), pp. 583--614.

\bibitem{air500}
{\em Dynamic Connectome Lab - Data Sets}. 
\url{https://sites.google.com/view/dynamicconnectomelab}

\bibitem{estrada2011structure}
E. Estrada, {\em The Structure of Complex Networks: Theory and Applications}, Oxford 
University Press, Oxford, 2011.

\bibitem{EH}
E. Estrada and N. Hatano, {\em Communicability in complex networks}, Phys. Rev. E, 77 
(2008), Art.  036111.

\bibitem{estrada2005subgraph}
E. Estrada and J. A. Rodriguez-Velazquez, {\em Subgraph centrality in complex networks}, 
Phys. Rev. E, 71 (2005), Art. 056103.

\bibitem{FMRR}
C. Fenu, D. Martin, L. Reichel, and G. Rodriguez, {\em Network analysis via partial 
spectral factorization and Gauss quadrature}, SIAM J. Sci. Comput., 35 (2013), 
pp. A2046--A2068.

\bibitem{FRR}
C. Fenu, L. Reichel, and G. Rodriguez, {\em SoftNet: A package for the analysis of complex
networks}, Algorithms, 15 (2022), Art. 296.

\bibitem{Fr}
L. Freeman, {\em A set of measures of centrality based on betweenness}, Sociometry, 40 
(1977), pp. 35--41.

\bibitem{GN}
M. Girvan and M. E. J.  Newman, {\em Community structure in social and biological 
networks}, Proc. Natl. Acad. Sci. USA, 99 (2002), pp. 7821--7826.

\bibitem{HKYH}
P. Holme, B. J. Kim, C. N. Yoon, and S. K. Han, {\em Attack vulnerability of complex 
networks}, Phys. Rev. E, 65 (2002), Art. 056109.

\bibitem{ICON}
 {
\em Index of Complex Networks (ICON). 
\url{https://icon.colorado.edu/networks}
}

\bibitem{LPWS}
Y. Li, G-P. Jiang, M. Wu, and Y. Song, {\em  Methods to measure the network path 
connectivity}, Secur. Commun. Netw., 2020 (2020), pp. 1--10.

\bibitem{newman2010}
M. E. J. Newman, {\em Networks: An Introduction}, Oxford University Press, Oxford, 2010.

\bibitem{NR_na}
S. Noschese and  L. Reichel, {\em Network analysis with the aid of the path length 
matrix}, Numer. Algorithms, 95 (2024), pp. 451--470.

\bibitem{NR}
S. Noschese and L. Reichel, {\em Edge importance in complex networks}, Numer. Algorithms, 
99 (2025), pp. 377--410. 

\bibitem{SSMC}
 {
\em SuiteSparse Matrix Collections. 
\url{https://sparse.tamu.edu/Pajek}
}

\bibitem{PNTM}
D. Plav\v{s}i\'c, S. Nikoli\'c, N. Trinajsti\'c, and Z. Mihali\'c, {\em On the Harary index 
for the characterization of chemical graphs}, J. Math. Chem., 12 (1993), pp. 235--250.

\bibitem{Sc}
M. Schweitzer, {\em Sensitivity of matrix function based network communicability measures:
Computational methods and a priori bounds}, SIAM J. Matrix Anal. Appl., 44 (2023),
pp. 1321--1348.

\bibitem{Za}
 {W.W. Zachary, {\em An Information Flow Model for Conflict and Fission in Small Groups}, 
J. Anthro. Research 33  (1977), pp. 452--473.}


\end{thebibliography}
\end{document}